\newtheorem{theorem}{Theorem}
\newtheorem{lemma}{Lemma}
\newtheorem{definition}{Definition}
\providecommand{\customgenericname}{}
\newcommand{\newcustomtheorem}[2]{%
	\newenvironment{#1}[1]
	{%
		\renewcommand\customgenericname{#2}%
		\renewcommand\theinnercustomgeneric{##1}%
		\innercustomgeneric
	}
	{\endinnercustomgeneric}
}
\newenvironment{customthm}[1]
  {\innercustomthm}
  {\endinnercustomthm}
\newcommand{\alv}[1]{{\color{black} #1}}
\newcommand{\id}{\mathbbm{1}} 
\newcommand{\tr}[1]{\operatorname{\textnormal{Tr}}\left( {#1} \right)}  
\newcommand{\Ckubo}{\operatorname{ C_{\textnormal{Kubo}} } } 
\begin{document}

\title{Time evolution of correlation functions in quantum many-body systems}

\date{\today}

\author{\'Alvaro M. Alhambra}
\email{aalhambra@perimeterinstitute.ca }
\affiliation{Perimeter Institute for Theoretical Physics, Waterloo, ON N2L 2Y5, Canada}

\author{Jonathon Riddell}
\email{riddeljp@mcmaster.ca}
\affiliation{Department of Physics \& Astronomy, McMaster University
	1280 Main St.  W., Hamilton ON L8S 4M1, Canada.}

\author{Luis Pedro Garc\'ia-Pintos}
\thanks{Corresponding author}
\email{lpgp@umd.edu}
\affiliation{Department of Physics, University of Massachusetts, Boston, MA 02125, USA}
\affiliation{Joint Center for Quantum Information and Computer Science, NIST/University of Maryland, College Park, Maryland 20742, USA}
\affiliation{Joint Quantum Institute, NIST/University of Maryland, College Park, Maryland 20742, USA}

\begin{abstract}
We give rigorous analytical results on the temporal behavior of two-point correlation functions --also known as dynamical response functions or Green’s functions-- in closed many-body quantum systems. We show that in a large class of  translation-invariant  models the correlation functions factorize at late times $\langle A(t) B\rangle_\beta \rightarrow \langle A \rangle_\beta \langle B \rangle_\beta$, thus proving that dissipation emerges out of the unitary dynamics of the system. We also show that  for systems with a generic spectrum  the fluctuations around this late-time value are bounded by the purity of the thermal ensemble, which generally decays exponentially with system size.
For auto-correlation functions we provide an upper bound on the timescale at which they reach the factorized late time value. Remarkably, this bound is only a function of local expectation values, and does not increase with system size. We give numerical examples that show that this bound is a good estimate in non-integrable models, and argue that the timescale that appears can be understood in terms of an emergent fluctuation-dissipation theorem. 
Our study extends to further classes of two point functions such as the symmetrized ones and the Kubo function that appears in linear response theory, for which we give analogous results. 

\end{abstract}

\maketitle

Two-point correlation functions --or dynamical response/Green's functions-- are the central object of the theory of linear response \cite{kubo1957statistical}, and appear in the characterization of a wide range of non-equilibrium and statistical phenomena in the study of quantum many-body systems and condensed matter physics~\cite{rickayzen2013greenBook}.
This includes different types of scattering and spectroscopy experiments \cite{zhu2005modern}, quantum transport \cite{zwanzig1965time,zotos1997transport}, and fluctuation-dissipation relations \cite{Srednicki99,Khatami13,d2016quantum}.
They have also appeared in the characterization of topological \cite{nussinov2008autocorrelations} and crystalline ordering \cite{watanabe2015absence}, of quantum-chaotic systems~\cite{luitz2016anomalous,gharibyan2019characterization} and of different notions of ergodicity in quantum and classical systems \cite{cornfeld2012ergodic,venuti2019ergodicity}.

Here we study the time evolution of such correlation functions in isolated systems evolving under unitary dynamics. More precisely, we focus on functions of the form 
\begin{align}
C^{AB}(t)&\equiv \langle A(t) B \rangle_\beta =\tr{\rho  A(t) B},
\end{align}
where the evolution is generated by a time-independent Hamiltonian $H$, $\rho \equiv  e^{-\beta H} / Z_\beta$ is a thermal state at inverse temperature $\beta$ with partition function $Z_\beta$, and $A(t) = e^{i Ht}Ae^{-i Ht} $ is the evolved observable in the Heisenberg picture. Both $A$ and $B$ are usually taken to be either local (such as a single-site spin) or extensive operators (such as a global current or magnetization).

Two-point correlation functions have been widely studied before, mostly through numerical methods such as exact diagonalization \cite{steinigeweg2009density}, QMC~\cite{starykh1997dynamics} and tensor networks \cite{sirker2005real,barthel2009spectral,
feiguin2010spectral,karrasch2012finite,
dargel2012lanczos,barthel2013precise,
bohrdt2017scrambling}, and analytically 
 for specific models, e.g.  \cite{korepin1997quantum,
its1993temperature,
stolze1995gaussian,
reyes2006finite,perk2009new,
bertini2019exact}. Also, a number of experimental schemes to measure it directly have been proposed~\cite{romero2012quantum,
knap2013probing,
buscemi2013direct,gessner2014nonlinear,uhrich2017noninvasive}, which manage to circumvent the obstacle of having to measure two non-commuting observables on a single system.
Here, we give rigorous analytical results on their dynamical behavior with as few assumptions on the Hamiltonian as possible. 
Our results apply to most  translation-invariant  non-integrable Hamiltonians, in which the degeneracy of the energy spectrum is small.

First, for arbitrary local observables $A$ and $B$ we 
prove that, for late times, the following signature of dissipation occurs in a large class of translation-invariant  models 
\begin{equation} \label{eq:factor}
\langle A(t) B \rangle_\beta \stackrel{t\rightarrow \infty}{\longrightarrow} \langle A \rangle_\beta \langle B \rangle_\beta.
\end{equation}
Moreover, we show that the fluctuations around the late-time value are in fact
 bounded by the effective dimension $d^{-1}_{\text{eff}} \equiv\tr{\rho^2}$ of the ensemble, which decays quickly with system size.
 
For the 
 case of auto-correlation functions, when $A=B$, we also derive an upper bound on the timescale at which the factorization of Eq. \eqref{eq:factor} happens, which, remarkably, is independent of the size of the system.  
 We provide numerical evidence showing that the bound is in fact a good estimate even for moderate system sizes, and becomes tighter as the size increases. 

Our study can be extended to a large class of $2$-point correlation functions. 
For instance, for 
 symmetric correlation functions $C_s^A(t) \equiv \langle A(t) A \rangle_\beta + \langle A A(t) \rangle_\beta$, 
 we find that  
  evolution is dominated by a timescale which is at most of 
  order  $t^2 \sim \frac{\langle A^2\rangle_\beta}{\langle [A,H][H,A] \rangle_\beta }$. We argue that this can be interpreted in terms of a fluctuation-dissipation theorem that arises from the unitary dynamics of the system. Finally, we consider the timescales of evolution of the Kubo correlation function that appears in linear response theory~\cite{kubo1957statistical,Khatami13}, which dictates the response of a system at equilibrium to a perturbation in its Hamiltonian.

\

\noindent
\emph{Late-time behaviour ---} 
We now show the rigorous formulation of the late-time factorization of  $2$-point functions.
First, we need the following definition.
\begin{definition}[Clustering of correlations]
\label{def:clustering}
	A state $\rho$ on an Euclidean lattice $\mathbb{Z}^D$ has finite correlation length $\xi >0$ if it holds that
	\begin{equation}
	\max_{X\in M,Y \in N} \frac{\vert \tr{\rho X\otimes Y} -\tr{\rho X} \tr{\rho Y} \vert}{\vert \vert X \vert \vert \vert \vert Y \vert \vert}\le e^{-\frac{\textnormal{dist}(M,N)}{\xi}},
	\end{equation}
	where $M,N$ are regions on the lattice separated by a distance of at least  $\textnormal{dist}(M,N)$, \alv{and $X,Y$ are arbitrary operators with support on each region}.
\end{definition}
This condition is generic of thermal states at finite temperature away from a phase transition. It has been 
proven at least for 1D systems \cite{araki1969gibbs} and arbitrary models above a threshold temperature \cite{kliesch2014locality}.
 In order to prove factorization at late times,
we focus on systems on states that show clustering of correlations, and 
whose Hamiltonians are $k$-local, i.e. which can be written as $H=\sum_j h_j$, where $h_j$ couples at most $k$  closest neighbors.

Given that evolution is unitary and the system is finite-dimensional, limits such as $\lim_{t\rightarrow \infty} C^{AB}(t)$ are not well-defined.  Hence, we consider the 
  late-time behaviour 
  under \emph{infinite-time averages} of the correlation functions
  $\lim_{T\rightarrow \infty} \int_0^T \frac{\text{d}t}{T} C^{AB}(t)$. 
With these considerations, our first main result is the following. 

\begin{theorem}
\label{thm:decoupling}
	Let $H$ be a $k$-local, translation-invariant, non-degenerate Hamiltonian on a $D$-dimensional Euclidean lattice of $N$ sites, and let $[\rho,H]=0$ be an equilibrium ensemble (such as a thermal state) of finite correlation length $\xi>0$. Let $A,B$ be local observables with support on  at most \alv{$\mathcal{O}(N^{\frac{1}{D+1}-\nu})$ sites, with $\nu>0$}. Then
	\begin{align}
	\lim_{T\rightarrow \infty} \int_0^T C^{AB}(t) \frac{\text{d}t}{T}=&\tr{\rho A} \tr{\rho B} \nonumber \\&+\mathcal{O} \left( \xi^{\frac{2 D}{D+1}} \log^2 (N) N^{-\frac{2}{D+1}} \right)
	\end{align}
\end{theorem}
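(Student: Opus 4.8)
The plan is to expand the infinite-time average in the energy eigenbasis, isolate the diagonal contribution, and then control the off-diagonal remainder using the non-degeneracy of the spectrum together with the clustering hypothesis.

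First I would write $H = \sum_n E_n \ketbra{n}{n}$ and compute
\begin{equation}
\lim_{T\to\infty}\int_0^T C^{AB}(t)\frac{\mathrm{d}t}{T} = \sum_{n,m} \rho_{nn}\, A_{nm} B_{mn}\, \lim_{T\to\infty}\int_0^T e^{i(E_n-E_m)t}\frac{\mathrm{d}t}{T},
\end{equation}
where $\rho_{nn}$ are the eigenvalues of $\rho$ (using $[\rho,H]=0$). Since $H$ is non-degenerate, the time average of the phase is $\delta_{nm}$, so the whole expression collapses to the \emph{diagonal ensemble} value $\sum_n \rho_{nn} A_{nn} B_{nn} = \tr{\rho\, \mathcal{D}(A)\,\mathcal{D}(B)}$-ish, more precisely $\sum_n \rho_{nn} A_{nn}B_{nn}$. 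The task is then to show this diagonal sum is close to $\tr{\rho A}\tr{\rho B} = \big(\sum_n \rho_{nn}A_{nn}\big)\big(\sum_m \rho_{mm}B_{mm}\big)$. Equivalently, I must bound the ``diagonal covariance'' $\sum_n \rho_{nn} A_{nn}B_{nn} - (\sum_n \rho_{nn}A_{nn})(\sum_m \rho_{mm}B_{mm})$.

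The key idea for bounding this covariance is to relate the diagonal matrix elements $A_{nn}$ to microcanonical/local averages and invoke clustering. Concretely, I would introduce a suitably coarse-grained energy window and use the fact that, for a translation-invariant $k$-local Hamiltonian, eigenstates within a narrow energy shell have nearly the same local expectation values as the corresponding thermal state (an ETH-type statement that here should follow from clustering plus locality, not be assumed); the translation-invariance lets me spread the support of $A$ over $\sim N/|\mathrm{supp}(A)|$ translated copies and average, gaining a factor that shrinks fluctuations. The scaling $\mathcal{O}(N^{1/(D+1)-\nu})$ on the support size and the final error $\xi^{2D/(D+1)}\log^2(N)N^{-2/(D+1)}$ strongly suggest the following mechanism: partition the lattice into blocks of linear size $\ell \sim \xi \log N$ (so that clustering makes different blocks effectively independent up to $1/\mathrm{poly}(N)$ errors), there are $\sim N/\ell^D$ such blocks, and the variance of a block-averaged observable is $\sim \ell^D/N$ by a clustering-based law of large numbers; optimizing $\ell$ against the residual correlation tail $e^{-\ell/\xi}$ and balancing the support-size constraint produces exactly the stated exponents. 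I would carry this out by (i) writing $\tr{\rho A}\tr{\rho B}$ and the diagonal sum both in terms of spatial averages of $A,B$, (ii) applying Definition~\ref{def:clustering} to bound cross-block terms, (iii) bounding the within-block and boundary terms by operator norms times the number of such terms, and (iv) optimizing the block size.

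The main obstacle I anticipate is step (ii)--(iii): controlling the diagonal matrix elements. One cannot directly apply clustering to $A_{nn}$ since eigenstates are not thermal states; the cleanest route is to note that $\sum_n \rho_{nn} A_{nn}B_{nn}$ is itself an infinite-time average, namely $\lim_T \frac1T\int_0^T \tr{\rho A(t)B}\,\mathrm{d}t$ is what we started with, but also $\sum_n \rho_{nn}A_{nn}B_{nn} = \lim_T \frac1T \int_0^T \tr{\rho A(t) B(t')}\,\mathrm{d}t$ averaged appropriately — so instead I would bound it by $\tr{\rho A B}$-type quantities only after a second averaging trick, or alternatively pass through the representation $\sum_n \rho_{nn} A_{nn} = \tr{\rho \bar A}$ where $\bar A \equiv \lim_T \frac1T\int_0^T A(t)\,\mathrm{d}t$ is the time-averaged (dephased) observable, and then argue that $\bar A$, while not local, is well-approximated in the relevant states by a local object because $A$ evolves quasi-locally (Lieb–Robinson) over the timescales that dominate the average. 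Making this ``$\bar A$ is effectively local'' step rigorous, with the right $N$-dependence, is where the real work and the logarithmic factors will come from; everything else is bookkeeping with operator norms and the geometry of the lattice.
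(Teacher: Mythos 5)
Your first half matches the paper exactly: dephasing via non-degeneracy reduces the infinite-time average to the diagonal-ensemble value $\sum_n \rho_{nn}A_{nn}B_{nn}$, and the problem becomes bounding the diagonal covariance $\sum_n\rho_{nn}\Delta_{n,A}\Delta_{n,B}$ with $\Delta_{n,A}=A_{nn}-\tr{\rho A}$. You also correctly identify the central obstacle — clustering is a property of $\rho$, not of individual eigenstates, so it cannot be applied directly to $A_{nn}$. But neither of your proposed ways around this obstacle closes the gap. The block-decomposition law-of-large-numbers argument controls fluctuations of spatial averages \emph{in the state $\rho$}, which is not the quantity you need; and the fallback — arguing that the dephased observable $\bar A=\lim_T\frac1T\int_0^T A(t)\,dt$ is ``effectively local'' via Lieb--Robinson — fails because the infinite-time average is not dominated by any finite timescale, so Lieb--Robinson bounds give no control over it. You correctly flag this step as ``where the real work will come from,'' but that is precisely the step you have not supplied.

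The paper's mechanism (Lemma~\ref{th:weakETH}, the weak ETH of Brand\~ao et al.) transfers concentration from $\rho$ to eigenstates by a different route: translation invariance plus non-degeneracy imply every eigenstate is translation invariant, so $A_{nn}=\bra{E_n}\bar A/N\ket{E_n}$ where $\bar A=\sum_i A_i$ is the \emph{extensive} sum of lattice translates of $A$ (not the time average). Then the chain Jensen ($e^{\bra{\psi}X\ket{\psi}}\le\bra{\psi}e^X\ket{\psi}$) plus Markov reduces the probability, under the measure $\rho_{nn}$, that $A_{nn}$ deviates from $\tr{\rho A}$ by $\delta$ to the exponential moment $\tr{\rho\,e^{\lambda\bar A'}}$, which \emph{is} an expectation in the clustering state $\rho$ and is controlled by the Gaussian concentration bound of Anshu for extensive observables in states with finite correlation length. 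This yields $\Pr(|\Delta_{n,A}|\ge\delta)\le\exp(-c\,\delta N^{1/(D+1)}\xi^{-D/(D+1)})$; choosing the threshold $\Delta\sim\log N/N^{1/(D+1)}$ and splitting the covariance sum into good and bad eigenstates then produces the stated $\xi^{2D/(D+1)}\log^2(N)N^{-2/(D+1)}$ error. The support restriction $\mathcal{O}(N^{1/(D+1)-\nu})$ enters to keep the prefactor $e^{DkN^\alpha/\xi}$ in the concentration bound subleading, not through a block-size optimization. Without this eigenstate-concentration lemma (or an equivalent), your argument does not go through.
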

 This guarantees that all operators supported on a region \alv{with size scaling like any function smaller than $\mathcal{O}(N^{\frac{1}{D+1}})$,} satisfy the assumptions of the theorem.
The proof, found in Appendix \ref{app:latetime}, relies on a weak form of the Eigenstate Thermalization Hypothesis (ETH)
shown in \cite{brandao2017quantum}, which is itself based on previous works on large deviation theory for lattice models \cite{anshu2016concentration,mori2016weak}. 
This shows that, in fact, any model obeying the weak ETH and without too many degeneracies will display identical factorization of correlation functions at long times~\cite{biroli2010effect}.

 Note that we assume that the energy spectrum is non-degenerate, which is accurate for systems without non-trivial symmetries or extensive number of conserved quantities. In particular, non-integrable systems usually display Wigner-Dyson statistics in their fine-grained spectrum, which imply level repulsion \cite{d2016quantum}.

This factorization of the correlation function can be thought of as a signature of the emergence of dissipation due to unitary dynamics, since the lack of correlations at different times indicates the loss of information about an initial perturbation of $B$ at time $t=0$, as reflected in the observable $A$ at time $t$~\cite{kubo1957statistical}.

\

\noindent
\emph{Fluctuations around late-time value ---}
For most times, the 2-point correlation function is in fact close to its late-time average, with small fluctuations around the equilibrium value.
 In order to prove this, one needs the extra assumption that the energy gaps are non-degenerate, which is again reasonable in non-integrable systems with connected Hamiltonians \cite{d2016quantum,gogolin2016equilibration},
where it is generally expected to hold as random perturbations are sufficient to lift degeneracies in energy gaps~\cite{Bookcohen2006quantum}.

Let us define $C^{AB}_{\infty}=\lim_{T\rightarrow \infty} \int_0^T \frac{\text{d}t}{T} C^{AB}(t)$, and the average fluctuations around the late-time value as
\begin{align}
\sigma_C^2=\lim_{T \rightarrow \infty} \int_0^T \frac{\text{d}t}{T} \left( C^{AB}(t)-C^{AB}_\infty \right)^2.
\end{align}
The following result puts an upper bound on {average fluctuations.
\begin{theorem}
\label{thm:equilibration}
	Let \alv{$H=\sum_j E_j \ket{j}\bra{j}$} be a Hamiltonian with non-degenerate energy gaps, such that
	\begin{equation}
	E_j-E_k=E_m-E_l \,\, \Leftrightarrow j=m\,\,,k=l,
	\end{equation}
and let $[\rho,H]=0$. It holds that  
\begin{align}\label{eq:boundf}
\sigma_C^2 \le& \, \|A\| \, \|B\| \max_{j \neq k} \{ \left| A_{kj} B_{jk} \right| \}    \tr{ \rho^2 } ,
\end{align}
\alv{where $A_{kj}, B_{jk}$ are matrix elements in the energy eigenbasis,  $A = \sum_{jk} A_{jk} \ket{j}\! \bra{k}$}.
\end{theorem}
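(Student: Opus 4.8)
The plan is to diagonalise everything in the energy eigenbasis and evaluate the infinite-time average directly, as in the standard equilibration arguments. Since $[\rho,H]=0$ I can write $\rho=\sum_j p_j\ket{j}\!\bra{j}$, and expanding $A(t)$ and $B$ in the eigenbasis gives
\[
C^{AB}(t)=\sum_{j,k} p_j\, e^{i(E_j-E_k)t}\, A_{jk} B_{kj}.
\]
Non-degeneracy of the gaps in particular rules out $E_j=E_k$ for $j\neq k$, so the infinite-time average keeps only the diagonal terms, $C^{AB}_\infty=\sum_j p_j A_{jj}B_{jj}$, and hence $C^{AB}(t)-C^{AB}_\infty=\sum_{j\neq k} p_j\, e^{i(E_j-E_k)t}\, A_{jk}B_{kj}$.

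Next I would compute $\sigma_C^2$, reading it (since $C^{AB}(t)$ is generally complex) as the time average of $|C^{AB}(t)-C^{AB}_\infty|^2$. Expanding the modulus square as a double sum over ordered pairs $(j,k)$ and $(m,l)$ with $j\neq k$, $m\neq l$, the oscillating factor $e^{i(E_j-E_k-E_m+E_l)t}$ averages to zero unless $E_j-E_k=E_m-E_l$. This is exactly where the non-degenerate-gaps hypothesis is used in full: together with $j\neq k$ it forces $j=m$ and $k=l$, leaving
\[
\sigma_C^2=\sum_{j\neq k} p_j^2\, |A_{jk}|^2\,|B_{kj}|^2 .
\]

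To close the bound I would factor $\max_{j\neq k}|A_{kj}B_{jk}|$ out of each summand — the set of values $\{|A_{jk}B_{kj}|\}_{j\neq k}$ coincides with $\{|A_{kj}B_{jk}|\}_{j\neq k}$ after relabelling — which reduces the problem to bounding $\sum_{j\neq k} p_j^2\,|A_{jk}|\,|B_{kj}|$. For each fixed $j$, dropping the constraint $k\neq j$ (which only enlarges a sum of nonnegative terms) and applying Cauchy--Schwarz gives $\sum_{k}|A_{jk}|\,|B_{kj}|\le (AA^\dagger)_{jj}^{1/2}(B^\dagger B)_{jj}^{1/2}\le \|A\|\,\|B\|$, using that a diagonal matrix element of a positive semidefinite operator is at most its operator norm, together with $\|AA^\dagger\|=\|A\|^2$ and $\|B^\dagger B\|=\|B\|^2$. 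Summing over $j$ with weights $p_j^2$ turns $\sum_j p_j^2$ into $\tr{\rho^2}$ and yields the claimed inequality. The argument is essentially routine bookkeeping; the only delicate points are (i) verifying that the time average genuinely collapses onto $j=m,\,k=l$, which is the step most sensitive to the spectrum of $H$, and (ii) being explicit that for non-real correlators the relevant fluctuation is $|C^{AB}(t)-C^{AB}_\infty|^2$. I do not expect any real obstacle beyond these.
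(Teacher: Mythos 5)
Your proof is correct and shares the overall skeleton of the paper's argument (eigenbasis expansion, collapse of the double sum via the non-degenerate-gap condition, extraction of $\max_{j\neq k}|A_{kj}B_{jk}|$, then Cauchy--Schwarz), but it diverges at one substantive point. The paper takes $\sigma_C^2$ literally as the time average of $(C^{AB}(t)-C^{AB}_\infty)^2$ with no modulus, so its double sum collapses onto $\sum_{j\neq k}\rho_{jj}\rho_{kk}\,|A_{jk}|^2|B_{jk}|^2$, with the \emph{cross} weight $\rho_{jj}\rho_{kk}$; your reading with $|\cdot|^2$ instead yields the weight $\rho_{jj}^2$. This changes the endgame: the paper runs a global Cauchy--Schwarz over pairs $(j,k)$ to reach $\sqrt{\tr{\rho A\rho A}}\sqrt{\tr{\rho B\rho B}}$ and then needs the additional operator inequalities $\tr{\rho A\rho A}\le\tr{A^2\rho^2}\le\|A\|^2\tr{\rho^2}$, whereas your row-wise Cauchy--Schwarz gives $\sum_k|A_{jk}||B_{kj}|\le\|A\|\,\|B\|$ for each fixed $j$ and produces $\sum_j\rho_{jj}^2=\tr{\rho^2}$ directly, which is arguably cleaner. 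Both routes land on the identical final bound, and your version is in fact the stronger statement: since the time average of $|f|^2$ dominates the modulus of the time average of $f^2$ (equivalently, $\sum_{j\neq k}\tfrac{1}{2}(\rho_{jj}^2+\rho_{kk}^2)x_{jk}\ge\sum_{j\neq k}\rho_{jj}\rho_{kk}x_{jk}$ for the symmetric nonnegative $x_{jk}=|A_{jk}|^2|B_{jk}|^2$), bounding the modulus-squared fluctuation implies the paper's bound. The only thing to make explicit is that you are proving this stronger $|\cdot|^2$ variant, since the paper's displayed definition of $\sigma_C^2$ uses the plain square.
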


The proof can be found in Appendix \ref{app:late-time}. It follows the same steps as the main result in \cite{short2011equilibration}. 
Here, we also find that the purity $\tr{\rho^2}$ of the equilibrium ensemble plays a key role. For a microcanonical ensemble $\tr{(\id / d)^2}=1/d$, so the RHS of Eq. \eqref{eq:boundf} is expected to decay exponentially with system size in most situations of interest. Also, notice that for a thermal state $\tr{\rho_\beta^2}\le 1/Z_\beta$. Moreover, the ETH predicts that $\vert A_{kj} B_{jk}\vert  \sim 1/d$ \cite{deutsch2018eigenstate}.

\

\noindent
\emph{Timescales of equilibration ---} 
Theorems~\ref{thm:decoupling} and~\ref{thm:equilibration} combined imply that correlation functions of the form $\langle A(t) B \rangle_\beta$ are, for most times $t \in [0,\infty]$,
 close to the uncorrelated average $\langle A \rangle_\beta \langle B \rangle_\beta$, for a wide class of  translation-invariant systems.
It is expected that the timescale at which this happens may depend on a number of factors, such as the distance between $A$ and $B$. If the operators are far apart on the lattice the correlations are limited by the Lieb-Robinson bound \cite{lieb1972finite,huang2018lieb}, and timescales associated with ballistic ($\propto N^{1/D}$) or diffusive ($\propto N^{2/D}$) processes may play a role. However, for the autocorrelation function $C^{A}(t) \equiv \langle A(t)A \rangle_\beta$, we can show that  equilibration to the late-time value occurs in a short timescale, independent of system size. 
There may also be further effects at larger timescales, such as the Thouless time~\cite{dymarsky2018bound,schiulaz2018thouless}, and for those effects our result limits their relative size.

%

\alv{Let us define $\rho = \sum_j \rho_{jj} \ket{j}\! \bra{j}$, so that $\rho_{jj}$ and $A_{jk}$ are the matrix elements of $\rho$ and $A$ in the energy basis.} We can then write
\begin{align}
\label{eq:correlationfunction}
\frac{C^{A}(t)}{C^{A}(0)} &= \sum_{j k} \frac{\rho_{jj} |A_{jk}|^2}{C^A(0)} e^{-i(E_j - E_k)t} \equiv \sum_{\alpha} v_\alpha e^{-iG_\alpha t}, 
\end{align}
where we denote \alv{each} pair of levels $\{i,j\}$ by a Greek index, and the corresponding energy gaps by $G_\alpha\equiv E_j-E_k$ \alv{(notice that both $E_j-E_k$ and $E_k-E_j$ appear in the sum)}.   The normalized distribution $v_\alpha \equiv \frac{\rho_{jj} |A_{jk}|^2}{C^A(0)}$ is central to our proofs, since it contains all the relevant information about the state, observable and Hamiltonian, and determines which frequencies contribute to the dynamics of the autocorrelation function.  Based on it, we define the following functions.
\begin{definition}
\label{def:xi}
	Given a normalized distribution $p_\alpha$ over  energy gaps $G_\alpha$, we define $\xi_p(x)$ as the maximum weight that fits an interval of energy gaps with width $x$:
	\begin{align}\label{eq:xidef}
\xi_p(x) \equiv \max_{\lambda} \sum_{\alpha : G_\alpha \in \left[ G_\lambda, G_\lambda + x \right] } p_\alpha.
\end{align}
We also define
\begin{align}
a(\epsilon) \equiv \frac{ \xi_p(\epsilon) }{\epsilon} \sigma_G, \qquad \delta(\epsilon) \equiv \xi_p(\epsilon),
\end{align}
where $\sigma_G = \sqrt{ \sum_\alpha p_\alpha G_\alpha^2 - \left( \sum_\alpha p_\alpha G_\alpha \right)^2 }$ is  the standard deviation of the distribution $p_\alpha$ over the energy gaps $G_\alpha$.  
\end{definition}
 The important point behind these definitions is that, for a sufficiently smooth and unimodal probability distribution, one can find an $\epsilon$ small enough such that $a(\epsilon)\sim\mathcal{O}(1)$ and $\delta(\epsilon)\ll 1$.
In the following theorem, the relevant probability distribution is given by $v_\alpha$. 
Our main result regarding the timescales of correlation functions, proven in Appendix \ref{app:timescaleCorr},  is:
\begin{theorem}
\label{thm:timescalesCorr}
For any Hamiltonian $H$ and state $\rho$ such that $[H,\rho] = 0$, and any observable $A$, the autocorrelation function $C^A(t) = \tr{\rho A(t) A }$ satisfies
\begin{align}
\label{eq:boundCorrAveraged}
\frac{1}{T}\int_0^T \frac{ |C^A(t) - C^A_\infty|^2  }{\left(C^A(0)\right)^2} dt  &\le 3 \pi    \left( \frac{a(\epsilon)}{\sigma_G} \frac{1}{T} + \delta(\epsilon) \right)   
\end{align}
 for any $\epsilon>0$. Here, $a(\epsilon)$ and $\delta(\epsilon)$ are as in Definition~\ref{def:xi} for the normalized distribution $v_\alpha \equiv \frac{\rho_{jj} |A_{jk}|^2}{C^A(0)}$, and 
 $\sigma_G$ is given by
\begin{align}
\sigma_G^2 &= \frac{ 1 }{ C^A(0) } \tr{\rho [A ,H] [H,A ]}  - \frac{\tr{\rho [H,A ] A }^2}{\left(C^A(0)\right)^2}.
\end{align}
\end{theorem}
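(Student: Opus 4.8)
The plan is to follow the route used for equilibration-timescale bounds (cf.\ Ref.~\cite{short2011equilibration} and its refinements), with the simplification, special to the autocorrelation, that $v_\alpha$ is a genuine probability distribution on the energy gaps (since $A=A^\dagger$ and $\rho,A^2\succeq 0$ make $\rho_{jj}|A_{jk}|^2\ge 0$). Starting from the spectral expansion in Eq.~\eqref{eq:correlationfunction}, I would first note that the infinite-time average kills every oscillating term, so $C^A_\infty/C^A(0)=\sum_{\alpha:\,G_\alpha=0} v_\alpha$ (any accidental degeneracies $E_j=E_k$, $j\ne k$, are simply absorbed here, which is why no non-degeneracy hypothesis is required), and hence set $h(t)\equiv\big(C^A(t)-C^A_\infty\big)/C^A(0)=\sum_{\alpha:\,G_\alpha\neq 0} v_\alpha e^{-iG_\alpha t}$. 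Expanding the left-hand side of Eq.~\eqref{eq:boundCorrAveraged} then gives the double sum $\sum_{\alpha,\beta} v_\alpha v_\beta\,\frac1T\int_0^T e^{-i(G_\alpha-G_\beta)t}\,dt$ over gap-pairs with $G_\alpha,G_\beta\ne 0$, where the kernel has modulus $\big|\frac1T\int_0^T e^{-i\omega t}dt\big|\le\min\!\big(1,\tfrac{2}{|\omega|T}\big)$.

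I would then split the pairs according to the size of $|G_\alpha-G_\beta|$ relative to $\epsilon$. For the ``resonant'' pairs $|G_\alpha-G_\beta|\le\epsilon$, bound the kernel by $1$ and the resonant weight by covering each interval $[G_\alpha-\epsilon,G_\alpha+\epsilon]$ with two windows of width $\epsilon$, so that $\sum_{|G_\alpha-G_\beta|\le\epsilon} v_\alpha v_\beta\le 2\,\xi_v(\epsilon)\sum_\alpha v_\alpha\le 2\,\delta(\epsilon)$ by Definition~\ref{def:xi}; this produces the $\delta(\epsilon)$ term. For the ``off-resonant'' pairs $|G_\alpha-G_\beta|>\epsilon$, use the decay of the kernel together with the fact that a window of width $s$ carries $v$-weight at most $\lceil s/\epsilon\rceil\,\xi_v(\epsilon)$: splitting further at the scale $|G_\alpha-G_\beta|\sim 1/T$ and summing the resulting contributions (a layer-cake/dyadic estimate) bounds this part by $\mathcal{O}\!\big(\xi_v(\epsilon)/(\epsilon T)\big)=\mathcal{O}\!\big(a(\epsilon)/(\sigma_G T)\big)$. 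Tracking the numerical constants gives the $3\pi$ prefactor.

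Separately, the stated formula for $\sigma_G^2$ is obtained as a short moment computation: differentiating the spectral expansion $C^A(t)/C^A(0)=\sum_\alpha v_\alpha e^{-iG_\alpha t}$ at $t=0$ gives $\sum_\alpha v_\alpha G_\alpha=-\mathrm{Tr}(\rho[H,A]A)/C^A(0)$ from the first derivative and $\sum_\alpha v_\alpha G_\alpha^2=-\ddot{C}^A(0)/C^A(0)=\mathrm{Tr}(\rho[A,H][H,A])/C^A(0)$ from the second, where the last identity uses $[\rho,H]=0$ and cyclicity of the trace to rewrite $-\mathrm{Tr}(\rho[H,[H,A]]A)=\mathrm{Tr}(\rho[A,H][H,A])$; subtracting the square of the first moment from the second is exactly $\sigma_G^2$ as in the statement.

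The main obstacle is the off-resonant estimate. Taking absolute values inside the sum and bounding crudely --- replacing $1/|G_\alpha-G_\beta|$ by $1/\epsilon$, or summing $1/k$ over $\epsilon$-spaced bins --- overshoots: one gets either a bound of order $1/(\epsilon T)$ with no $\xi_v(\epsilon)$ improvement, or a spurious factor $\log(1/\delta(\epsilon))$ (a near-uniform spread of gaps is the tight example). To recover the clean $\xi_v(\epsilon)/(\epsilon T)$ behaviour one must exploit the cancellations of the oscillatory $\mathrm{sinc}$ kernel, or, equivalently, first upper-bound the flat average $\frac1T\int_0^T(\cdot)\,dt$ by a smooth (e.g.\ Fej\'er-type) average whose spectral weight is non-negative and decays like $1/(\omega T)^2$, and combine this with the concentration of $v$ on every scale at once. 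This is the technical heart of the argument, and the choice of the auxiliary window and of the splitting scales is where the care is needed.
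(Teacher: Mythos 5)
Your plan is essentially the paper's proof: the delicate off-resonant step you flag at the end is exactly what the paper's Lemma~2 supplies, by dominating the flat time average with a Gaussian window (so the spectral weights become the non-negative, rapidly decaying $e^{-\sigma^2(G_\alpha-G_\beta)^2/2}$), covering the gap differences with intervals of width $1/T$ via $r^{-x^2}\le\sum_{n}r^{-n^2}g(|x|-n)$, and summing the resulting theta series to get $\langle |h|^2\rangle_T\le 3\pi\,\xi_v(1/T)$, after which $\epsilon$ enters only through the elementary covering inequality $\xi_v(1/T)\le a(\epsilon)/(\sigma_G T)+\delta(\epsilon)$ rather than through a direct resonant/off-resonant split. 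Your moment computation for $\sigma_G^2$ coincides with the paper's, and your diagnosis that the bare $\min(1,2/(|\omega|T))$ kernel plus a dyadic sum loses a logarithm is precisely why the smooth-window substitution is the one nontrivial ingredient.
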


\noindent Theorem~[\ref{thm:timescalesCorr}] provides an upper bound of $T_{eq} \equiv \frac{3\pi \, a(\epsilon)}{\sigma_G}$ on the timescales under which autocorrelation functions approach their steady state value.  
To see this note that, if for a given $T$ the RHS of Eq. \eqref{eq:boundCorrAveraged} is small, $C^A(t)$ must have spent a significant amount of time  during the interval $[0,T]$ near the late-time value $C^A_\infty$ .

 The crucial point is that   for distributions $v_\alpha$ that are uniformly spread over many values of the gaps $G_\alpha$, one can always find an $\epsilon$ such that $\delta \ll 1$. In that case, the right hand side of Eq.~\eqref{eq:boundCorrAveraged} becomes small on timescales $\mathcal{O}(T_{eq})$. 
As discussed in~\citep{Garcia-PintosPRX2017} and in Appendix \ref{app:aanddelta}, if one further assumes smooth unimodal distributions, typically  one also finds that $a \sim \mathcal{O}(1)$. In that case,  the timescale is governed by $1/\sigma_G$. 
 Given that $\sigma_G$ is a combination of expectation values of local observables, it does not scale with the system of the system.
In fact, a result of~\cite{kim2015slowest} shows that a timescale of order $1/\sigma_G$ provides a lower bound to the timescale of equilibration, which strongly suggests that our upper bound is tight when the conditions of $a\sim \mathcal{O}(1)$ and $\delta \ll 1$ hold.

As a prime example, for local operators in non-integrable lattice models, in which (as per the ETH) $\vert A_{jk}\vert$ are uniformly distributed around a peak at zero energy gap \cite{Beugeling15,Mondaini17}, one should be able to choose $\epsilon$ such that $a \sim \mathcal{O}(1)$ and $\delta \ll 1$. In Fig. \ref{fig:deltaplotexact} we numerically show that this is indeed the case in a non-integrable Ising model.

Theorem \ref{thm:timescalesCorr} does not make assumptions on the specifics of the Hamiltonian, the observable or the state,  making it completely general. 
However, we do not expect the correlation functions to equilibrate well in all cases, as in some scenarios $a(\epsilon)$ and $\delta(\epsilon)$ will be large  no matter what value of $\epsilon$ is chosen, in which case the RHS of Eq.~ \eqref{eq:boundCorrAveraged} may not become small within reasonable timescales.  This can happen, for instance, in models with highly degenerate energy spectrum. 
 To illustrate this, in Appendix \ref{app:simulation} we compute  these parameters in an integrable model, where we see that the gap degeneracies of the model negatively affect the quantities $a(\epsilon)$ and $\delta(\epsilon)$, making the estimated equilibration timescales longer.

\

\noindent
\emph{Symmetric correlation functions ---} 
The previous results can be extended to other correlation functions, such as
\begin{align}
C^A_s(t) \equiv \frac{1}{2} \tr{ \rho \left\{ A , A(t) \right\}}  = \frac{C^A(t) + C^A(t)^*}{2}.
\end{align}
Along the same lines of Theorem~[\ref{thm:timescalesCorr}], in Appendix \ref{app:timescaleCorrSymm} we prove the following.
\begin{theorem}
\label{thm:timescalesCorrSym}
For any Hamiltonian $H$ and state $\rho$ such that $[H,\rho] = 0$, and any observable $A$, the time correlation function $C_s^A(t) = \tr{\rho \{A, A(t)\} }$ satisfies
\begin{align}
\label{eq:boundCorrAveragedSym}
\frac{1}{T}\int_0^T \frac{ |C^A_s(t) - C^A_{s,\infty}|^2  }{(C^A_s(0))^2} dt  &\le 3 \pi    \left( \frac{a(\epsilon)}{\sigma_G} \frac{1}{T} + \delta(\epsilon) \right),   
\end{align}
 for any $\epsilon>0$. Here,  $a(\epsilon)$ and $\delta(\epsilon)$ are as in Definition ~[\ref{def:xi}] for the normalized distribution $v_\alpha^s \equiv \frac{\rho_{jj} + \rho_{kk}}{2} \frac{|A_{jk}|^2}{C^A_s(0)}$, and  
\begin{align}
\sigma_G^2 &= \frac{ 1 }{ C^A_s(0) } \tr{\rho [A_0,H] [H,A_0]}.
\end{align}
\end{theorem}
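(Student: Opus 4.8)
The plan is to cast $C_s^A(t)/C_s^A(0)$ into exactly the form to which the equilibration estimate behind Theorem~\ref{thm:timescalesCorr} applies, and then read off the relevant moments. Working in the energy eigenbasis, $C^A(t) = \sum_{jk}\rho_{jj}|A_{jk}|^2 e^{-i(E_j-E_k)t}$, so taking the real part and relabelling $j \leftrightarrow k$ in the conjugate term gives
\begin{align}
C_s^A(t) = \sum_{jk} \frac{\rho_{jj} + \rho_{kk}}{2}\, |A_{jk}|^2\, e^{-i(E_j - E_k)t}.
\end{align}
Since $C_s^A(0) = \tr{\rho A^2} \ge 0$ and the coefficients $\tfrac12(\rho_{jj}+\rho_{kk})|A_{jk}|^2$ are non-negative, collecting level pairs $\{j,k\}$ into a single index $\alpha$ with gap $G_\alpha = E_j - E_k$ yields $C_s^A(t)/C_s^A(0) = \sum_\alpha v_\alpha^s e^{-iG_\alpha t}$ with $v_\alpha^s = \tfrac{\rho_{jj}+\rho_{kk}}{2}\tfrac{|A_{jk}|^2}{C_s^A(0)} \ge 0$ and $\sum_\alpha v_\alpha^s = 1$: a genuine probability distribution over the energy gaps, so that $a(\epsilon)$ and $\delta(\epsilon)$ of Definition~\ref{def:xi} are well defined, and the late-time average $C_{s,\infty}^A/C_s^A(0)$ is the total weight of $v^s$ on the zero gaps.

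With this in place I would apply the time-averaged equilibration bound used to prove Theorem~\ref{thm:timescalesCorr} (that of Ref.~\cite{Garcia-PintosPRX2017}), which holds for any function of the form $\sum_\alpha p_\alpha e^{-iG_\alpha t}$ with $p_\alpha$ a probability distribution, verbatim with $p_\alpha = v_\alpha^s$. This produces Eq.~\eqref{eq:boundCorrAveragedSym}, with $a(\epsilon)$ and $\delta(\epsilon)$ built from $v^s$ and $\sigma_G$ equal to the standard deviation of $v^s$ over the $G_\alpha$.

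It then remains to express $\sigma_G$ through local operators. The first moment $\sum_\alpha v_\alpha^s G_\alpha$ vanishes: $|A_{jk}|^2$ and $\tfrac12(\rho_{jj}+\rho_{kk})$ are symmetric under $j \leftrightarrow k$ while $E_j - E_k$ is antisymmetric, so the symmetrization kills the drift term present in Theorem~\ref{thm:timescalesCorr}. For the second moment, $[H,A]_{jk} = (E_j - E_k)A_{jk}$ gives $\sum_{jk}\rho_{jj}|A_{jk}|^2(E_j-E_k)^2 = \tr{\rho[A,H][H,A]}$, and the $\rho_{kk}$ half equals the same quantity after relabelling; averaging the two halves leaves $\sigma_G^2 = \tr{\rho[A_0,H][H,A_0]}/C_s^A(0)$, where $A_0$ is $A$ with its energy-diagonal part removed (irrelevant inside the commutators), as claimed.

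There is no substantial obstacle here: the analytic core — the passage from the spread of $v^s$ over the gaps to a time-averaged equilibration bound — is already supplied by the lemma underlying Theorem~\ref{thm:timescalesCorr}, and $C_s^A(t)$ being real makes $|C_s^A(t) - C_{s,\infty}^A|^2$ literally the squared deviation appearing there. The only points needing care are bookkeeping ones: counting each pair so that both $E_j - E_k$ and $E_k - E_j$ enter and $v^s$ normalizes to one; checking $C_s^A(0) = \tr{\rho A^2} > 0$ so the normalization is legitimate (the statement being vacuous otherwise); and tracking the harmless factor $\tfrac12$ between the definition of $C_s^A$ and the expression $\tr{\rho\{A,A(t)\}}$ quoted in the theorem, which cancels in the normalized ratio.
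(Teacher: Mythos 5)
Your proposal is correct and follows essentially the same route as the paper: rewrite $C^A_s(t)/C^A_s(0)=\sum_\alpha v^s_\alpha e^{-iG_\alpha t}$ with the symmetrized non-negative normalized weights, apply the general time-averaging bound (the paper's Lemma~\ref{lem:avgbound} plus the $\xi_p(x)\le \frac{a(\epsilon)}{\sigma_G}x+\delta(\epsilon)$ estimate) verbatim, and identify $\sigma_G^2$ with $\tr{\rho[A,H][H,A]}/C^A_s(0)$. Your explicit remarks that the first moment of $v^s_\alpha$ vanishes by the $j\leftrightarrow k$ antisymmetry (which is why no drift term appears here, unlike in Theorem~\ref{thm:timescalesCorr}) and that the diagonal part of $A$ is irrelevant inside the commutators are details the paper leaves implicit, but the argument is the same.
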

\noindent Thus an upper bound for the equilibration timescale is
\begin{align}
\label{eq:equilibrationtimescalesymmetric}
T_{eq}  &=   \frac{3 \pi \, a(\epsilon) \, \sqrt{C^A_s(0)}}{\sqrt{ \tr{\rho [A ,H] [H,A ]} }}, 
\end{align}
where again  
 we expect that for small enough $\epsilon$, $a(\epsilon)\sim \mathcal{O}(1)$ and $\delta \ll 1$ for the same reasons as before. 
The denominator in $T_{eq}$ can be seen as an ``acceleration'' of the symmetric autocorrelation function. 
Eq.~\eqref{eq:equilibrationtimescalesymmetric} can in fact be written as
\begin{align}
\label{eq:equilibrationtimescaleaceleration}
T_{eq} =     \frac{3 \pi \, a(\epsilon) \, \sqrt{C^A_s(0)}}{\sqrt{ \left| \frac{d^2 C^A_s(t) }{dt^2} \big\vert_0    \right| }}. 
\end{align}

Such timescale turns out to be similar to that of a short-time analysis. A Taylor expansion gives  
\begin{align}
C^A_s(t) &=C^A_s(0) \left(  1 -    \frac{1}{2 C^A_s(0) } \frac{d^2 C^A_s(t) }{dt^2} \bigg\vert_0   t^2  \right)  +\mathcal{O}(t^3).
\end{align}
For early times, the above expression decays on a timescale $\tau =  \frac{\sqrt{2}}{3 \pi a(\epsilon)} T_{eq}$, 
identical to our upper bound Eq.~\eqref{eq:equilibrationtimescaleaceleration}
up to a prefactor. 

The timescale of Eq. \eqref{eq:equilibrationtimescalesymmetric} suggests an interpretation in terms of an emergent fluctuation-dissipation theorem. Consider \emph{i)} $T_{eq}$ to be the timescale of dissipation of unitary dynamics, meaning that $\langle A(t) A \rangle_\beta \longrightarrow \langle A \rangle_\beta \langle A \rangle_\beta$ occurs, and \emph{ii)} $C^A_s(0)=\tr{\rho A^2}$ as a measure of the  fluctuations of $A$. Then, Eq. \eqref{eq:equilibrationtimescalesymmetric} gives a proportionality relation between the strength of the fluctuations and the timescale of equilibration, in a similar spirit to what was found in \cite{nation2018quantum} using random matrix theory arguments.

\

\noindent
\emph{Linear response and the Kubo correlation function ---}
As a further application of our methods, we study the evolution of a quantum system under a perturbation of its Hamiltonian. Let the system start in a thermal state, such that $\rho  \propto e^{-\beta (H+\lambda A)}$. Subsequently, the Hamiltonian is slightly perturbed by $\lambda A$,  so that the evolved state is $\rho_t = e^{-i t H} \rho e^{i t H}$. 

It was shown by Kubo~\cite{kubo1957statistical} that, to leading order in $\lambda$, the expectation value of $A$ satisfies $\tr{\rho A(t)}~=~C_{\text{Kubo}}(t)\tr{\rho A}$, where
for thermal initial  states $\rho$  the Kubo correlation function can be written as
\begin{align}
 C_{\text{Kubo}}(t) \propto \sum_{j\neq k}\frac{e^{-\beta E_k}-e^{-\beta E_j}}{E_j-E_k} \vert A_{jk}\vert^2 e^{i t(E_j-E_k)}.
\end{align}
Equilibration of $\tr{\rho A(t)}$ is then equivalent to equilibration of the function $\Ckubo(t)$, for which we prove in Appendix \ref{app:timescaleKubo}
that the following holds.
\begin{theorem}
\label{thm:timescalesKubo}
For any Hamiltonian $H$, thermal state $\rho  \propto e^{-\beta (H+\lambda A)}$, and any observable $A$, the Kubo correlation function $C_{\textnormal{Kubo}}$ satisfies
\begin{align}
\label{eq:boundKubo}
\frac{1}{T}\int_0^T \frac{ |C_{\textnormal{Kubo}}(t) - C_{\textnormal{Kubo},\infty} |^2  }{C_{\textnormal{Kubo}}(0)^2} dt  &\le 3 \pi    \left( \frac{a(\epsilon)}{\sigma_G} \frac{1}{T} + \delta(\epsilon) \right),   
\end{align}
 for any $\epsilon>0$. Here,  $a(\epsilon)$ and $\delta(\epsilon)$ are as in Definition ~[\ref{def:xi}] for the normalized distribution $w_\alpha \equiv \frac{e^{-\beta E_k}-e^{-\beta E_j}}{E_j-E_k} \frac{ \vert A_{jk}\vert^2 }{C_{\textnormal{Kubo}}(0)} $, and 
\begin{align}
\sigma_G^2 &= \frac{ 1 }{ \Ckubo(0) } \tr{ [A,\rho] [A,H]}.
\end{align}
\end{theorem}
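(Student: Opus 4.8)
The plan is to follow the same structure as the proofs of Theorems~\ref{thm:timescalesCorr} and~\ref{thm:timescalesCorrSym}, since the Kubo function has exactly the same form $C_{\textnormal{Kubo}}(t)/C_{\textnormal{Kubo}}(0) = \sum_\alpha w_\alpha e^{iG_\alpha t}$ as a Fourier series over energy gaps $G_\alpha = E_j - E_k$ with the normalized weights $w_\alpha$. First I would verify that the weights $w_\alpha = \frac{e^{-\beta E_k}-e^{-\beta E_j}}{E_j-E_k}\frac{|A_{jk}|^2}{C_{\textnormal{Kubo}}(0)}$ are indeed non-negative (the ratio $\frac{e^{-\beta E_k}-e^{-\beta E_j}}{E_j-E_k}$ is positive since $x \mapsto e^{-\beta x}$ is decreasing) and sum to one by the definition of $C_{\textnormal{Kubo}}(0)$, so that Definition~\ref{def:xi} applies verbatim to $w_\alpha$. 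Then the generic equilibration estimate that underlies the earlier theorems --- bounding the time-averaged squared deviation of $\sum_\alpha p_\alpha e^{iG_\alpha t}$ from its mean by $3\pi\left(\frac{a(\epsilon)}{\sigma_G}\frac{1}{T} + \delta(\epsilon)\right)$ for any distribution $p_\alpha$ over gaps --- can be invoked as a black box with $p_\alpha = w_\alpha$, immediately giving Eq.~\eqref{eq:boundKubo}.

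The only genuinely new computation is identifying $\sigma_G^2 = \sum_\alpha w_\alpha G_\alpha^2 - \left(\sum_\alpha w_\alpha G_\alpha\right)^2$ with the stated trace expression. I would compute $\sum_\alpha w_\alpha G_\alpha = \frac{1}{C_{\textnormal{Kubo}}(0)}\sum_{j\neq k}(e^{-\beta E_k}-e^{-\beta E_j})|A_{jk}|^2$, which is antisymmetric under $j\leftrightarrow k$ and hence vanishes, so $\sigma_G^2 = \sum_\alpha w_\alpha G_\alpha^2 = \frac{1}{C_{\textnormal{Kubo}}(0)}\sum_{j\neq k}(e^{-\beta E_k}-e^{-\beta E_j})(E_j-E_k)|A_{jk}|^2$. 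The task is then to recognize this double sum as $\tr{[A,\rho][A,H]}$ (up to the normalization of $\rho$, consistent with the proportionality in the definition of $C_{\textnormal{Kubo}}$): expanding $[A,\rho]_{kj} = A_{kj}(\rho_{jj}-\rho_{kk})$ and $[A,H]_{jk} = A_{jk}(E_k - E_j)$ in the energy eigenbasis and taking the trace reproduces $\sum_{j,k}|A_{jk}|^2(\rho_{jj}-\rho_{kk})(E_k-E_j)$, and since $\rho_{jj}\propto e^{-\beta E_j}$ this matches the expression above; the $j=k$ terms drop out automatically.

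I expect the main obstacle to be purely bookkeeping: carefully tracking the proportionality constant hidden in $C_{\textnormal{Kubo}}(t) \propto \sum \cdots$ and confirming it cancels consistently between numerator and denominator in both Eq.~\eqref{eq:boundKubo} and the formula for $\sigma_G^2$, and making sure the $j=k$ terms (where the Kubo weight formally involves $0/0$ but is understood as the limit $\beta e^{-\beta E_j}$, or is simply excluded from the sum as written) do not spoil normalization or the identification of the trace. There is also a minor subtlety worth a sentence: the derivation of $\tr{\rho A(t)} = C_{\textnormal{Kubo}}(t)\tr{\rho A}$ to leading order in $\lambda$ should be stated cleanly, but since the theorem only concerns equilibration of $C_{\textnormal{Kubo}}(t)$ itself, this is context rather than part of the proof. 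Once the weights are shown non-negative and normalized and $\sigma_G^2$ is identified, the bound follows immediately from the same lemma used for the earlier theorems.
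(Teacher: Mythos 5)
Your proposal is correct and follows essentially the same route as the paper: verify that $w_\alpha$ is a non-negative normalized distribution over gaps, invoke the generic time-averaging bound (Lemma~\ref{lem:avgbound} together with $\xi_p(x)\le \frac{a(\epsilon)}{\sigma_G}x+\delta(\epsilon)$) as a black box, and identify $\sigma_G^2$ with $\frac{1}{\Ckubo(0)}\tr{[A,\rho][A,H]}$ via the energy-eigenbasis expansion. You are in fact slightly more explicit than the paper on two points it glosses over --- the vanishing of the first moment $\sum_\alpha w_\alpha G_\alpha$ by $j\leftrightarrow k$ antisymmetry, and the $0/0$ diagonal terms --- so nothing is missing.
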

This again implies an upper bound $T_{eq} =     \frac{3 \pi \, a(\epsilon)}{\sigma_G}$ on the equilibration timescale of $\Ckubo$, 
and therefore on the time to return to thermal equilibrium after a perturbation of the system Hamiltonian by $A$. 
\alv{ The distribution $w_\alpha$ plays the same role as $v_\alpha$ and $v^s_\alpha$ before. If $w_\alpha$ is smoothly distributed and unimodal (which we expect for local observables in non-integrable models)  then $a \sim \mathcal{O}(1)$ and $\delta(\epsilon)\ll 1$ holds (see Appendix \ref{app:aanddelta})}.

\

\noindent
\emph{Simulations ---}
We test Theorem~[\ref{thm:timescalesCorr}] in a spin model governed by the Hamiltonian
\begin{equation} \label{eq:hamiltonian}
H = \sum_{j=1}^L\left(\gamma \sigma_j^X + \lambda \sigma_j^Z   \right) + J\sum_{j=1}^{L-1} \sigma_j^Z\sigma_{j+1}^Z+ \alpha\sum_{j=1}^{L-2} \sigma_j^Z\sigma_{j+2}^Z,
\end{equation}
where $\sigma_j^Z$ and $\sigma_j^X$ are the Pauli spin operators along $Z$ and $X$ directions for spin $j$, and
we take open boundary conditions.
The field and interaction coefficients $(\gamma,\lambda, J, \alpha )$ characterize the model. 
 We focus on a case corresponding to a system satisfying ETH by choosing $(\gamma,\lambda, J, \alpha ) = (0.8,0.5,1,1)$~\cite{kaneko2018work},
  and study the autocorrelation functions of the observable $A = \sigma_{\frac{L}{2}}^x.$ For simplicity we set $\beta = 1$ in our numerics, though no significant changes were observed for $\beta \in [0.1,5]$.
Figure~\ref{fig:deltaplotexact} depicts the functions $a(\epsilon)$ and $\delta(\epsilon)$ that appear in Theorem~\ref{thm:timescalesCorr},  confirming that there exist regions of $\epsilon$ such that $\delta \ll 1$, ensuring equilibration occurs, and $a \sim 0.4$. Importantly, this is increasingly the case as the size of the system grows.

Figure~\ref{fig:boundplot} compares the two sides in bound~\eqref{eq:boundCorrAveraged}, 
showing that dynamics obtained from the upper bound differs from the actual dynamics by roughly an order of magnitude.
Thus, the general, model-independent bounds obtained from Theorem~\ref{thm:timescalesCorr} provide remarkably good estimates of the actual (simulated) dynamics.
Note that the estimate becomes increasingly better as the size of the system increases. 
This discrepancy could, however, be a finite-size effect, which is also suggested by the lower bound obtained in~\cite{kim2015slowest}.
Details of the simulations can be found in \ref{app:simulation}.

\begin{figure}[h]
	\centering
	\includegraphics[width = 0.9\linewidth]{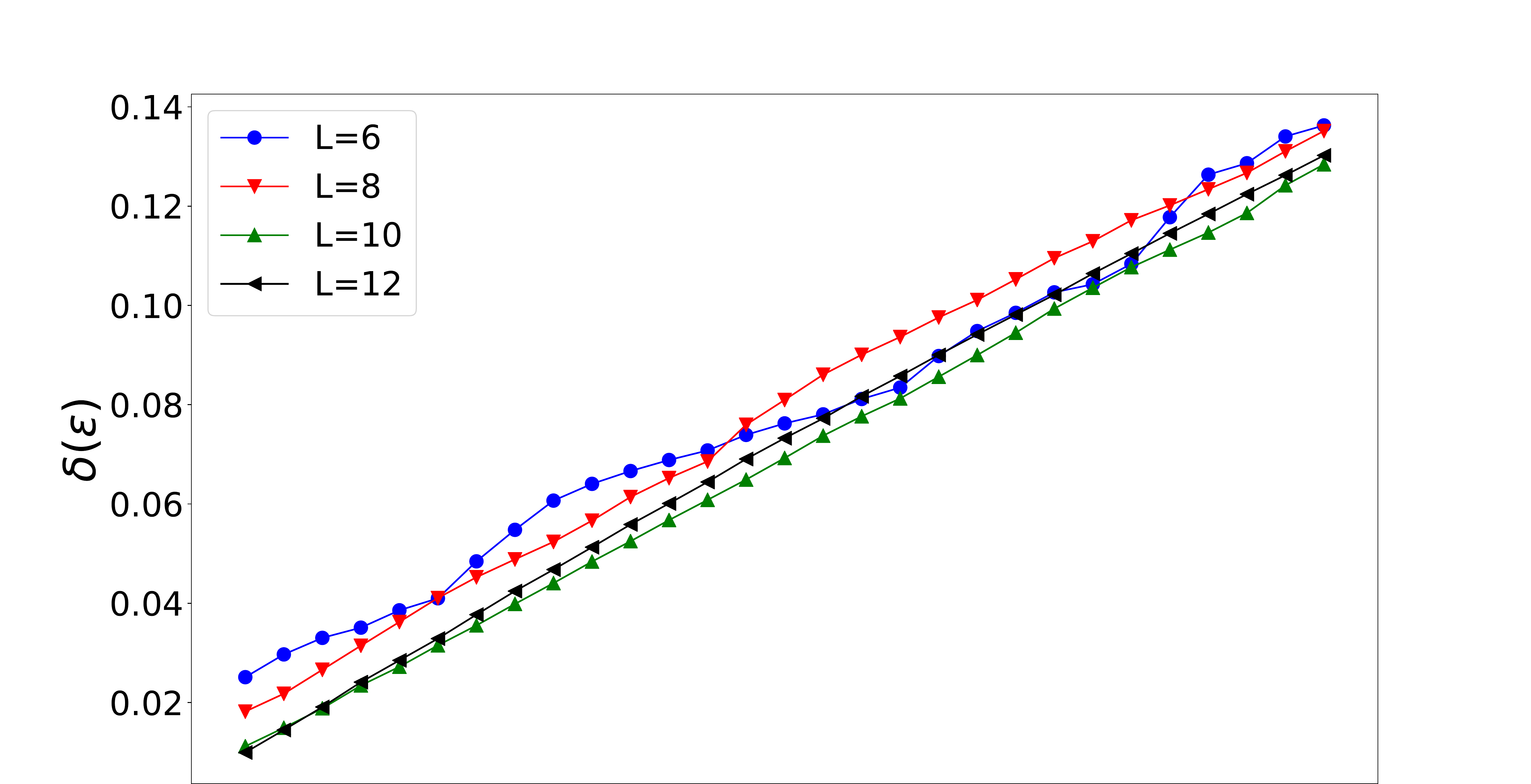}
		\includegraphics[width = 0.9\linewidth]{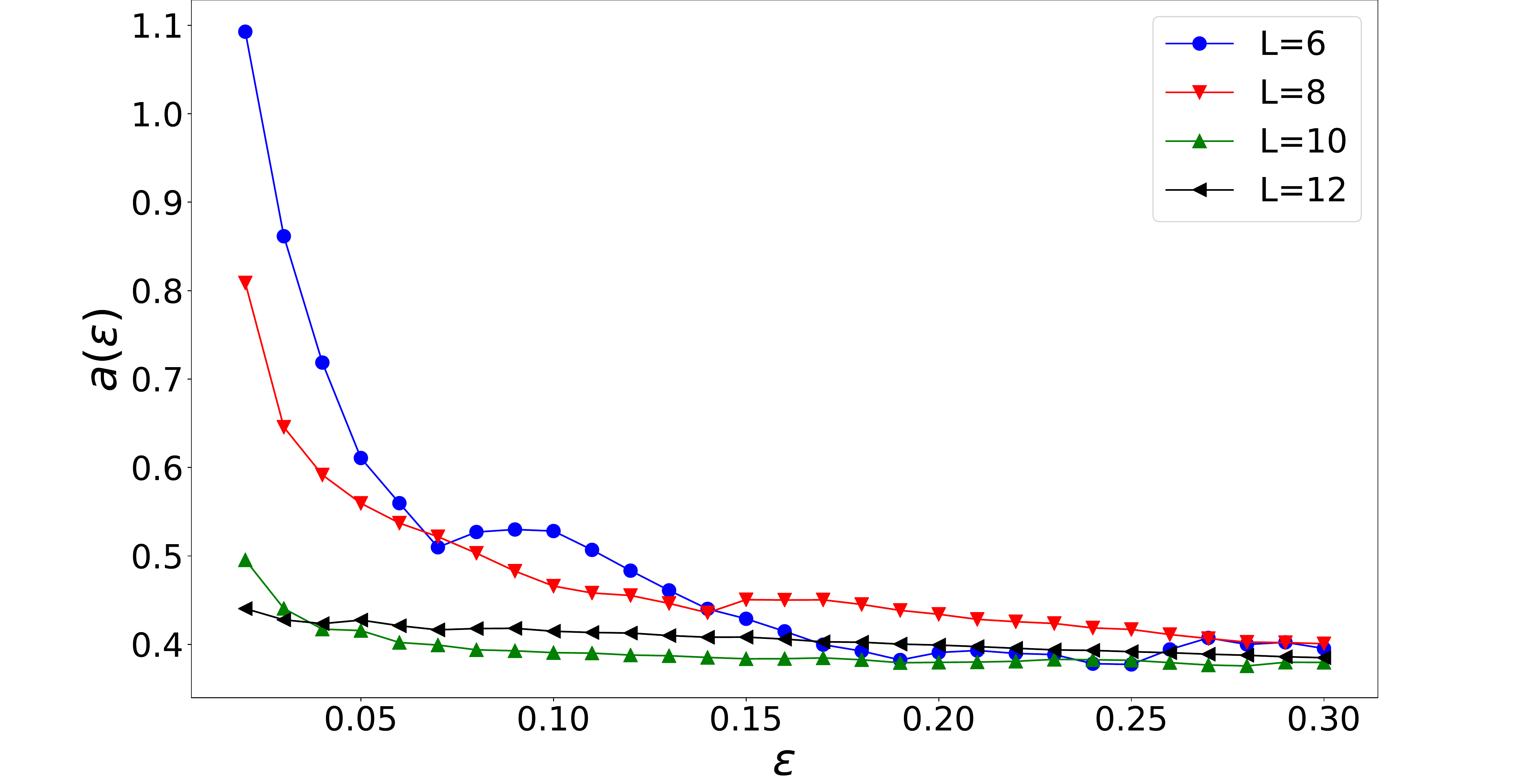}
	\caption{Plots of $\delta(\epsilon)$ (top) and $a(\epsilon)$ (bottom) for distribution  $v_\alpha \equiv \frac{\rho_{jj} |A_{jk}|^2}{C^A(0)}$  in Theorem~\ref{thm:timescalesCorr}, obtained by exact diagonalization and a Monte Carlo approximation.
	 The plots 
	 were generated with $10,000$ sampled frequency intervals. 
Small values of $\delta$ imply equilibration occurs for long enough times, while the value of $a$ controls the prefactor in the equilibration
 timescale 
$T_{eq} \equiv \frac{3\pi \, a(\epsilon)}{\sigma_G}$ derived from Eq.~\eqref{eq:boundCorrAveraged}.
 For small $\epsilon$ one can satisfy both $\delta \ll 1$ and $a \sim \mathcal{O}(1)$, and this becomes increasingly so for larger system sizes.} 
	\label{fig:deltaplotexact}
	\label{fig:aplotexact}
\end{figure}
\begin{figure}[ht]
	\centering
	\includegraphics[width=0.45\textwidth]{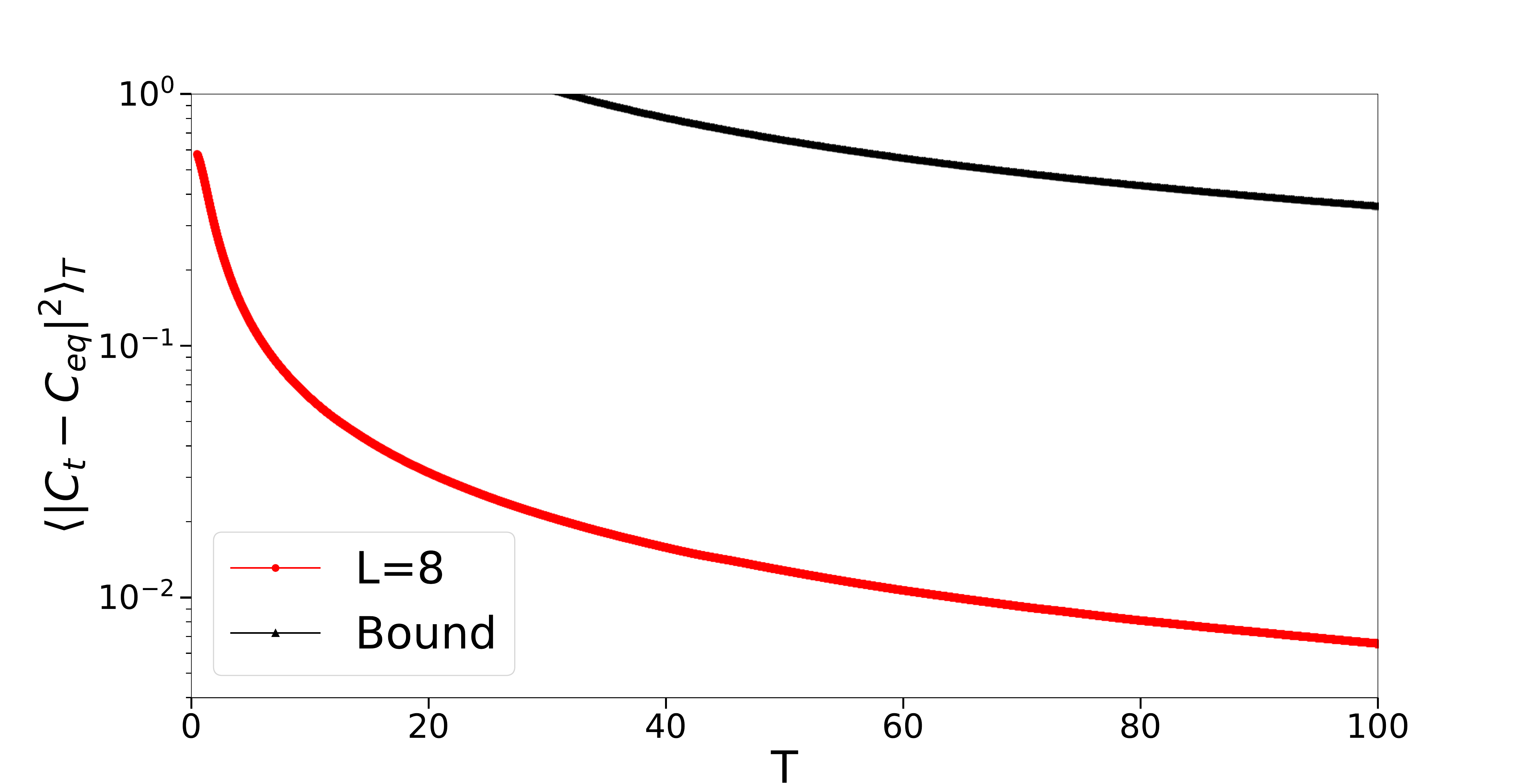} 
	\includegraphics[width=0.45\textwidth]{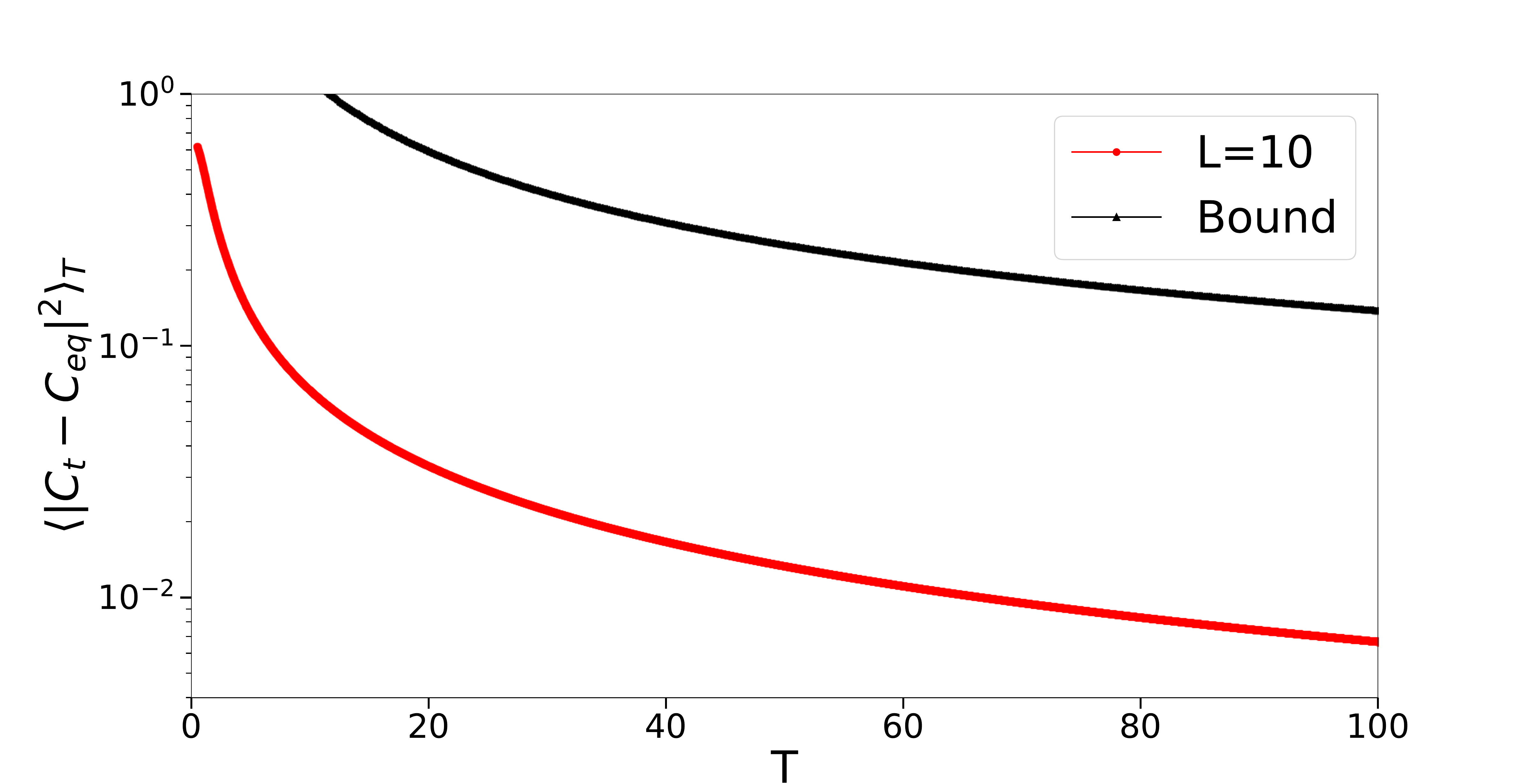} \includegraphics[width=0.45\textwidth]{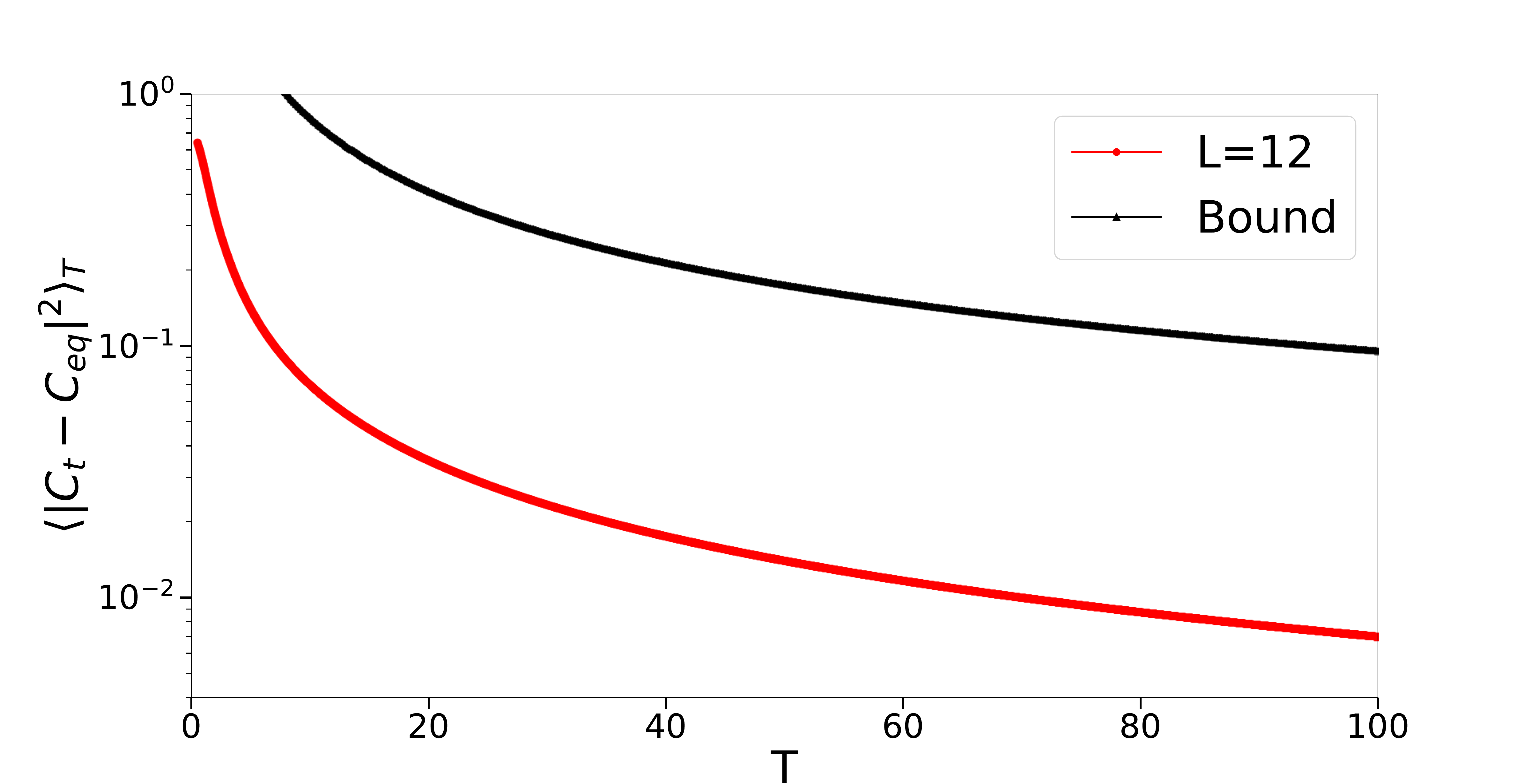} \caption{
		\label{fig:figure4SM} 
		Comparison of the upper bound in Eq.~\ref{eq:boundCorrAveraged}  (RHS) with the simulated evolution of the time-averaged correlation function (LHS) as a function of time, for increasing number of spins $L$. The evolution obtained from the upper bound approaches the exact dynamics of the system for larger system size. 
	}
	\label{fig:boundplot}
\end{figure}

\

\noindent\emph{Discussion ---} 
We derived analytic results on the dynamical behavior of $2$-point correlation functions in quantum systems. These include conditions that imply that time-correlation functions factorize for long times, as well as easy-to-estimate upper bounds on the timescales under which such process occurs
 which hold regardless of details of the model under consideration.
Remarkably, 
our numerical findings show that the
derived upper bounds can correctly estimate the actual dynamics of the system to within an order of magnitude, and
become increasingly better
  estimates as the size of the system increases. 



We used techniques previously applied in the context 
of equilibration of quenched quantum systems \cite{ShortFarrelly11,Malabarba14,Garcia-PintosPRX2017}, for which finding rigorous estimates on the timescales is a largely open problem \cite{Wilming17,de2018equilibration,dymarsky2018mechanism,Reimann_2019}. This connection is not surprising, specially considering that previous works \cite{Srednicki99,richter2018impact,richter2019relation} have argued that in some situations one can approximate the out of equilibrium dynamics with the autocorrelation functions covered here. 

Given the importance of time-correlation functions in the analysis of a wide range of problems in many-body physics --for instance, in transport phenomena-- we anticipate that our results will be useful in the description of closed system dynamics, whose study has surged in recent times due to enormous experimental advances in settings such as cold atoms or ion traps \cite{bloch2008many,schneider2012experimental}.

\

\noindent
\begin{acknowledgments}
The authors acknowledge useful discussions with Anurag Anshu, Beni Yoshida, Charlie Nation and Jens Eisert. We are also thankful to Yichen Huang for pointing out an improvement to the error term in Theorem~\ref{thm:decoupling} from a previous version (see \cite{HuangArxiv2019}). This research was supported in part by the Perimeter Institute for Theoretical Physics. Research at
Perimeter Institute is supported by the Government of
Canada through the Department of Innovation, Science
and Economic Development and by the Province of Ontario
through the Ministry of Research, Innovation and
Science. This research was also supported by NSERC and 
enabled in part by support provided by (SHARCNET) (www.sharcnet.ca) and Compute/Calcul Canada (www.computecanada.ca).
LPGP acknowledges support from the John Templeton Foundation, UMass Boston Project No. P20150000029279,
DOE Grant No. DE-SC0019515, AFOSR MURI project "Scalable Certification of Quantum Computing Devices and Networks", DoE ASCR Quantum Testbed Pathfinder program (award No. DE-SC0019040), DoE BES QIS program (award No. DE-SC0019449), DoE ASCR FAR-QC (award No. DE-SC0020312), NSF PFCQC program, AFOSR, ARO MURI, ARL CDQI, and NSF PFC at JQI.
\end{acknowledgments}
\bibliography{references}

\widetext
\appendix	
\section{Late time behaviour of two-point functions}
\subsection{Proof of late-time equilibration}\label{app:latetime}

First, we state the key result from \cite{brandao2017quantum} that we use.  It says that expectation values for single eigenstates, of the form $\bra{E_k} A \ket{E_k}$, are close to the ensemble average $\langle A \rangle_\beta$ with \emph{very high probability}. In contrast, the strong form of the ETH states that the above happens for \emph{all} eigenstates within an energy window. We reproduce the proof of \cite{brandao2017quantum} (which itself builds on \cite{mori2016weak}), with the difference that our version holds for lattices of dimension larger than $1$.

\begin{lemma} [Proposition 7 \cite{brandao2017quantum}]\label{th:weakETH}
	Let $H$ be a translation-invariant, non-degenerate Hamiltonian with $N$ sites on a $D$-dimensional lattice, $\rho$ an equilibrium ensemble $[\rho,H]=0$ with finite correlation length $\xi$, and $A$ some observable with support on a connected region of at most $\mathcal{O}(N^{1/(D+1)-\nu})$ sites, with $\nu>0$. Then, for any $\delta > 0$,
	\begin{equation}
	\text{Pr}_{\ket{E_k}\in \rho} (\vert \bra{E_k} A \ket{E_k}-\tr{\rho A}\vert \ge \delta )\le \exp{(-c \delta N^{\frac{1}{D+1}} \xi^{-\frac{D}{D+1}})},
	\end{equation}
	where $c>0$ is a constant, and $\ket{E_k}\in \rho$  indicates that the eigenstates are sampled from the equilibrium distribution $\rho$.
\end{lemma}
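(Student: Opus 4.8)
\emph{Proof strategy.} The plan is to reduce the claim to a concentration inequality for the \emph{extensive} translation-average of $A$ in the Gibbs state $\rho$, and then feed in the large-deviation bounds for states with exponential clustering developed in \cite{mori2016weak,anshu2016concentration,kliesch2014locality}. First I would use translation invariance: since $H$ commutes with all lattice translations and is non-degenerate, every eigenvector $\ket{E_k}$ is a simultaneous eigenvector of the translations, so $\bra{E_k}\tau_{\v r}(A)\ket{E_k}=\bra{E_k}A\ket{E_k}$ for every shift $\v r$, where $\tau_{\v r}(A)$ denotes the translate of $A$. Setting $\bar A\equiv\sum_{\v r}\tau_{\v r}(A)$ (the sum running over all $N$ sites), this gives the exact identity $\bra{E_k}A\ket{E_k}=\frac1N\bra{E_k}\bar A\ket{E_k}$. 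The gain is that $\bar A$ has the same eigenstate expectations as $A$ but is sharply peaked in $\rho$: translation invariance of $\rho$ together with the clustering hypothesis of Definition~\ref{def:clustering} gives $\mathrm{Var}_\rho(\bar A/N)\lesssim\|A\|^2(\ell_A+\xi)^D/N$, where $\ell_A$ is the linear size of the support of $A$.

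Next I would convert the eigenstate-sampling probability into a statement about the thermal state. For any Hermitian $X$ and normalized $\ket{\psi}$, Jensen's inequality applied to $x\mapsto e^{sx}$ and the spectral distribution of $X$ in $\ket{\psi}$ gives $e^{s\bra{\psi}X\ket{\psi}}\le\bra{\psi}e^{sX}\ket{\psi}$; averaging this over $\ket{E_k}$ drawn with weights $\rho_{kk}=\bra{E_k}\rho\ket{E_k}$ and taking $X=\bar A/N$ yields $\mathbb{E}_{\ket{E_k}\in\rho}\big[e^{s\bra{E_k}A\ket{E_k}}\big]\le\tr{\rho\,e^{s\bar A/N}}$. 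Combined with Markov's inequality (and symmetrically for the lower tail), this bounds the probability in the lemma by $\inf_{s}e^{-s(\tr{\rho A}+\delta)}\tr{\rho\,e^{s\bar A/N}}$, so the problem reduces to estimating the moment generating function of $\bar A/N$ in $\rho$.

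The remaining step is the large-deviation bound for $\bar A$ in the Gibbs state. Because $\rho$ clusters exponentially with length $\xi$ and $\bar A$ is a sum of $N$ bounded terms with exponentially decaying covariances, the results of \cite{mori2016weak,anshu2016concentration,kliesch2014locality} upgrade the variance estimate above to a sub-Gaussian moment generating function bound $\tr{\rho\,e^{s\bar A/N}}\le\exp\big(s\tr{\rho A}+Cs^2\|A\|^2(\ell_A+\xi)^D/N\big)$, valid in a suitable window of $s$ dictated by the $k$-locality of $H$, with a sub-exponential tail beyond it. Inserting this into the Chernoff bound, optimizing over $s$, and using that the assumed support size $\mathcal{O}(N^{1/(D+1)-\nu})$ makes $(\ell_A+\xi)^D$ negligible compared to $N$, produces a tail of the announced form $\exp\big(-c\,\delta\,N^{1/(D+1)}\xi^{-D/(D+1)}\big)$ (in fact a somewhat stronger estimate is available, but this form is what is needed downstream for Theorem~\ref{thm:decoupling}).

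The main obstacle, and the only genuine departure from the one-dimensional argument of \cite{brandao2017quantum}, lies in this last step for $D>1$: one needs the exponential-clustering hypothesis of Definition~\ref{def:clustering} and the accompanying Gibbs large-deviation bounds on a general Euclidean lattice --- rather than relying on one-dimensional transfer-matrix methods --- and one must carry out the optimization of the coarse-graining scale and the Chernoff parameter carefully enough to extract both the exponent $N^{1/(D+1)}$ and the power $\xi^{-D/(D+1)}$. The translation-invariance reduction of the first step and the Jensen/Chernoff argument of the second are dimension-independent and carry over verbatim.
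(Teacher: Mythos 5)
Your first two steps are exactly the paper's: translation invariance plus non-degeneracy gives $\bra{E_k}A\ket{E_k}=\frac{1}{N}\bra{E_k}\bar A\ket{E_k}$, and Jensen's inequality $e^{s\bra{\psi}X\ket{\psi}}\le\bra{\psi}e^{sX}\ket{\psi}$ combined with Markov reduces everything to controlling $\tr{\rho\, e^{\lambda \bar A'}}$ with $\bar A'=\bar A-\langle\bar A\rangle_\beta$. So the overall architecture matches.

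The one place you drift from what is actually available is the final step. For $D>1$ the cited concentration results do \emph{not} supply a sub-Gaussian moment generating function bound of the form $\exp\bigl(s\tr{\rho A}+Cs^2\|A\|^2(\ell_A+\xi)^D/N\bigr)$; if they did, the Chernoff optimization would yield a Gaussian tail $\exp(-c\,\delta^2 N/(\ell_A+\xi)^D)$, which is much stronger than (and of a different shape from) the claimed $\exp(-c\,\delta\,N^{1/(D+1)}\xi^{-D/(D+1)})$. What \cite{anshu2016concentration} actually gives is a stretched-exponential bound on spectral projectors, $\tr{\rho\,\Pi_{\ge x}}\lesssim\exp\bigl(-\tfrac{c'}{D\xi}(x^2\xi/N)^{1/(D+1)}\bigr)$, and the paper's proof proceeds by splitting the spectral decomposition of $e^{\lambda\bar A'}$ at eigenvalue $\delta N/2$: the small-eigenvalue block contributes at most $e^{\lambda\delta N/2}$, and the large-eigenvalue block is summed shell by shell against Anshu's projector bound. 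The exponent $N^{1/(D+1)}\xi^{-D/(D+1)}$ then comes from the \emph{forced} choice $\lambda=\mathcal{O}\bigl((N\xi)^{-D/(D+1)}\bigr)$ needed to make that second block subleading --- not from optimizing a variance-based Chernoff bound. Your hedge about ``a suitable window of $s$'' gestures at this, but the window size is the entire content of the step, and as written your proposal does not determine it; you would need to replace the sub-Gaussian MGF claim with the projector-tail argument (or an equivalent quantitative statement) to actually extract the stated exponent. The requirement $\alpha<\tfrac{1}{D+1}$ on the support of $A$ also enters here, through the prefactor $e^{DkN^\alpha/\xi}$ in Anshu's bound, rather than through a variance being ``negligible.''
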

\begin{proof}
	We show a bound $\text{Pr}_{\ket{E_k}\in \rho} ( \bra{E_k} A \ket{E_k}-\tr{\rho A} \ge \delta )$, and $\text{Pr}_{\ket{E_k}\in \rho} ( \tr{\rho A}- \bra{E_k} A \ket{E_k} \ge \delta )$ will follow analogous steps. Notice that since the Hamiltonian is translation-invariant and non-degenerate we can write $\bra{E_k} A \ket{E_k}=\bra{E_k} \bar{A}/N  \ket{E_k}$,
	where $\bar{A}=\sum_i A_i$ is the extensive observable built out of translations of $A$. Define $\bar{A}'=\bar{A}-\langle \bar{A} \rangle_\beta$. Then, using Markov's inequality and $e^{\bra{\psi} A \ket{\psi}}\le \bra{\psi} e^A \ket{\psi}$  we can write
	\begin{align}
	\text{Pr}_{\ket{E_k}\in \rho} ( \bra{E_k} A \ket{E_k}-\tr{\rho A} \ge \delta ) &= \text{Pr}_{\ket{E_k}\in \rho}(e^{\lambda \bra{E_k} \bar{A}' \ket{E_k} }\le e^{\lambda \delta N}) \\
	& \le e^{- \lambda \delta N } \sum_k \rho_{kk} e^{\lambda \bra{E_k} \bar{A}' \ket{E_k} } \\ & \le e^{- \lambda \delta N } \tr{\rho e^{\lambda \bar{A}'}}.\label{eq:prbound}
	\end{align}
	Now let us write the spectral decomposition $e^{\lambda \bar{A}'}=\sum_l e^{\lambda a_l} \Pi_{l}$, with $\Pi_l$ the projectors into the subspace with eigenvalue $e^{\lambda a_l}$, and write the average as
	\begin{equation}\label{eq:decomp}
	\tr{\rho e^{\lambda \bar{A}'}}=\sum_{a_l \le \delta N /2 }  e^{\lambda a_l} \tr{\rho \Pi_{l}} + \sum_{a_l > \delta N /2 } e^{\lambda a_l} \tr{\rho \Pi_{l}}.
	\end{equation}
	The first term is upper bounded by $e^{\lambda \delta N/2}$. For the second, we write
	\begin{align}
	\sum_{a_l > \delta N /2 } e^{\lambda a_l} \tr{\rho \Pi_{l}} =\sum_j \sum_{\delta N/2+(j+1) \ge a_l \ge \delta N/2+j} e^{\lambda a_l} \tr{\rho \Pi_{l}} \le \sum_j e^{\lambda (\delta N  /2+j+1)} \tr{\rho \Pi_{\ge \delta N/2+j}},
	\end{align}
	where $\Pi_{\ge x}$ denotes the projector on the subspace with $a_l \ge x$. The main result of \cite{anshu2016concentration} states that
	\begin{equation}
	\tr{\rho \Pi_{\ge \delta N/2+j}} \le e^{D k N^\alpha /\xi} \text{exp}\left(- \frac{c'}{D \xi}\left(N (\delta/2+j/N)^2 \xi\right)^{1/D+1} \right).
	\end{equation}
	Here, $k N^\alpha$ is the number of sites on which an individual local term of observable $\bar{A}$ has support.
	Since $j$ is at most $\mathcal{O}(N)$, we can choose some $\lambda =\mathcal{O}\left((N \xi)^{-\frac{D}{D+1}} \right)$ such that, for some constant $c_1>0$,
	\begin{equation}
	\sum_{a_l > \delta N /2 } e^{\lambda a_l} \tr{\rho \Pi_{l}} \le \text{poly}(N) e^{D k N^\alpha /\xi} \text{exp}\left(-c_1 N^{\frac{1}{D+1}} \xi^{-\frac{D}{D+1}}\right). 
	\end{equation}
	This way, the dominant contribution of Eq. \eqref{eq:decomp} is the first term. Plugging the bounds back in Eq. \eqref{eq:prbound} results in the following, for some constant $c>0$ and large enough $N$,
	\begin{align}\label{eq:deltab}
	\text{Pr}_{\ket{E_k}\in \rho} ( \bra{E_k} A \ket{E_k}-\tr{\rho A} \ge \delta ) &\le e^{-\lambda \delta N/2}\left(1 + e^{-\lambda \delta N/2} \text{poly}(N) e^{D k N^\alpha /\xi} \text{exp}\left(-c_1 N^{\frac{1}{D+1}} \xi^{-\frac{D}{D+1}}\right)\right) \\& \le \text{exp}\left(-c \delta N^{\frac{1}{D+1}} \xi^{-\frac{D}{D+1}}\right),
	\end{align}
	where the last line follows from the fact that the second term in the RHS of Eq.\eqref{eq:deltab} is subleading (much smaller than $1$) as long as $\alpha<\frac{1}{D+1}$. \alv{This sets the support of of the observable $A$ ot be within a region of size at most $\mathcal{O}(N^{1/(D+1)-\nu})$ for any $\nu>0$}. 
\end{proof}
With it, we are now ready to prove the result on late-time factorization of correlation functions.
\begin{customthm}{1}
	Let $H$ be a $k$-local, translation-invariant and non-degenerate Hamiltonian on a $D$-dimensional Euclidean lattice of $N$ sites, and let $[\rho,H]=0$ be an equilibrium ensemble (such as a thermal state) of finite correlation length $\xi>0$. Let $A,B$ be local observables with support on at most \alv{$\mathcal{O}(N^{\frac{1}{D+1}-\nu})$ sites, with $\nu>0$}. Then
	\begin{equation}
	\lim_{T\rightarrow \infty} \int_0^T \tr{\rho A(t) B} \frac{\text{d}t}{T}=\tr{\rho A} \tr{\rho B}+\mathcal{O}(\xi^{ \frac{2 D}{D+1}} \log^2 (N) N^{-\frac{2}{D+1}})
	\end{equation}
\end{customthm}
\begin{proof}
	First let us write
	\begin{align}
	\rho=\sum_k \rho_{k k} \ket{E_k}\bra{E_k} \,\,,\,\, A=\sum_{k j} A_{k j} \ket{E_k}\bra{E_j}\,\, , \,\, B=\sum_{k j} B_{k j} \ket{E_k}\bra{E_j},
	\end{align}
	from which we have
	\begin{equation}
	\tr{\rho A(t) B}=\sum_{k j} A_{k j} B_{j k} \rho_{kk} e^{i t (E_j-E_k)}. 
	\end{equation}
	Since the Hamiltonian is non-degenerate by assumption, it holds that 
	\begin{equation}
	\lim_{T\rightarrow \infty} \int_0^T e^{i t (E_j-E_k)} \frac{d t}{T}=\delta_{k,j},
	\end{equation}
	so that the limit becomes
	\begin{equation}
	\lim_{T\rightarrow \infty} \int_0^T \tr{\rho A(t) B} \frac{dt}{T}=\sum_{k} A_{k k} B_{k k} \rho_{kk}.
	\end{equation}
	Now let us define $A_{k k}-\tr{\rho A}=\Delta_{k,A}$ and $B_{k k}-\tr{\rho B}=\Delta_{k,B}$ , so that~\cite{HuangArxiv2019} 
	\begin{align}\label{eq:errorterms}
	\sum_{k} A_{k k} B_{k k} \rho_{kk}&=\tr{\rho A}\tr{\rho B}+\sum_k \rho_{k k}(\tr{\rho A}\Delta_{k,B}+\tr{\rho B} \Delta_{k,A}+\Delta_{k,A}\Delta_{k,B})\\ & =\tr{\rho A}\tr{\rho B}+\sum_k \rho_{k k}\Delta_{k,A}\Delta_{k,B}
	\end{align}
	Let us define $\Delta \equiv K \log N/ N^{\frac{1}{D+1}}$, and split the sum over energies of the error term as
	\begin{equation}\label{eq:sumk1}
	\sum_k \rho_{k k}\Delta_{k,A}\Delta_{k,B}=\sum_{k \in \mathcal{S}} \rho_{k k}\Delta_{k,A}\Delta_{k,B}+\sum_{k \notin \mathcal{S}} \rho_{k k}\Delta_{k,A}\Delta_{k,B}.
	\end{equation}
	where $\mathcal{S}\equiv \{k: |\Delta_{k,A}|,|\Delta_{k,B}|\le\Delta \}$ (that is, the set of $k$ for which both errors are small).
	Notice that the first term is smaller than $\Delta^2$ by definition. On the other hand, the second term can be bounded as
	\begin{align}
	\sum_{k \notin \mathcal{S}} \rho_{k k}\Delta_{k,A}\Delta_{k,B} &\le \max_{k'} \Delta_{{k'},A} \Delta_{{k'},B} \sum_{k \notin \mathcal{S}} \rho_{k k} \nonumber \\
	&\le \max_{k'} \Delta_{{k'},A} \Delta_{{k'},B} \left(\sum_{\vert \Delta_{k,A} \vert \ge \Delta} \rho_{k k}+\sum_{\vert \Delta_{k,B} \vert \ge \Delta} \rho_{k k} \right )\nonumber
	\\
	&\le 2\max_{k'} \Delta_{{k'},A}\Delta_{{k'},B} \exp{(-c \Delta N^{\frac{1}{D+1}} \xi^{-\frac{D}{D+1}})} \nonumber  \\
	&\le 2 \vert\vert A \vert \vert \vert \vert B \vert \vert \left( \frac{1}{N} \right)^{c K\xi^{-\frac{D}{D+1}}}.
	\end{align}
	The third line follows from Lemma \ref{th:weakETH}, and the fourth from $|\Delta_{k,A}|\le \vert \vert A \vert \vert,|\Delta_{k,B}|\le \vert \vert B \vert \vert$. The constant $K$ is arbitrary, so we can choose it such that $c K\xi^{-\frac{D}{D+1}} = \frac{2}{D+1}$. In that case the dominant contribution to Eq.~\eqref{eq:sumk1} is that of the first term, and hence $\sum_k \rho_{k k}\Delta_{k,A} \Delta_{k,B} =\mathcal{O}(\Delta^2)$, so that
	\begin{equation}
	\sum_{k} A_{k k} B_{k k} \rho_{kk}=\tr{\rho A}\tr{\rho B}+\mathcal{O}(\Delta ^2),
	\end{equation}
	completing the proof.
\end{proof}

\subsection{Proof of fluctuations around late-time value}
\label{app:late-time}

\begin{customthm}{2}
	\label{thm:equilibration}
	Let \alv{$H=\sum_i E_i \ket{i}\bra{i}$} be a Hamiltonian with non-degenerate energy gaps, such that
	\begin{equation}
	E_j-E_k=E_m-E_l \,\, \Leftrightarrow j=m\,\,,k=l,
	\end{equation}
	and let $[\rho,H]=0$. It holds that  
	\begin{align}\label{eq:boundf}
	\sigma_C^2 \le& \, \|A\| \, \|B\| \max_{j \neq k} \{ \left| A_{kj} B_{jk} \right| \}    \tr{ \rho^2 } ,
	\end{align}
	\alv{where $A_{kj}, B_{jk}$ are off-diagonal matrix elements in the energy eigenbasis,  $A = \sum_{jk} A_{jk} \ket{j}\! \bra{k}$}.
\end{customthm}

\begin{proof}
	Let us expand in the energy eigenbasis.
	\begin{align}
	\sigma_C^2 &=\lim_{T \rightarrow \infty}\int_0^T \frac{\text{d}t}{T} (C(t)^{AB}-C_\infty^{AB})^2 \\ 
	&= \sum_{j\neq k} \sum_{l\neq m} \rho_{jj} \rho_{ll} A_{jk}B_{kj} A_{lm} B_{ml} \int_0^T \frac{\text{d}t}{T} e^{-i t (E_j-E_k+E_l-E_m)}
	\\ &= \sum_{j\neq k} \rho_{jj} \rho_{kk} A_{jk}A_{kj} B_{jk}B_{kj}
	\\ &\le \sum_{j\neq k} \rho_{jj} \rho_{kk} \left| A_{jk} A_{kj} B_{jk}B_{kj} \right|
	\\ &\le  \max_{j \neq k} \{ \left| A_{kj} B_{jk} \right| \}\sum_{j\neq k} \rho_{jj} \rho_{kk} \left | A_{jk}B_{kj} \right|
	\end{align}
	In the second to the third line we used the assumption of non-degenerate energy gaps. 
	We now use the Cauchy-Schwarz inequality twice.
	\begin{align}
	\sigma_C^2 &\le  \max_{j \neq k} \{ \left| A_{kj} B_{jk} \right| \}  \sqrt{ \sum_{j\neq k} \rho_{jj} \rho_{kk} \left | A_{jk} \right|^2 } \sqrt{ \sum_{j\neq k} \rho_{jj} \rho_{kk} \left | B_{kj} \right|^2 }
	\\ &\le  \max_{j \neq k} \{ \left| A_{kj} B_{jk} \right| \}  \sqrt{ \sum_{j,k} \rho_{jj} \rho_{kk} \left | A_{jk} \right|^2 } \sqrt{ \sum_{j,k} \rho_{jj} \rho_{kk} \left | B_{kj} \right|^2 }
	\\ &=   \max_{j \neq k} \{ \left| A_{kj} B_{jk} \right| \}  \sqrt{ \tr{\rho A \rho A} } \sqrt{ \tr{\rho B \rho B} }
	\\& \le  \max_{j \neq k} \{ \left| A_{kj} B_{jk} \right| \}  \sqrt{  \tr{A^2 \rho^2 } }  \sqrt{ \tr{B^2 \rho^2 } }
	\\& \le  \|A\| \, \|B\| \max_{j \neq k} \{ \left| A_{kj} B_{jk} \right| \}    \tr{ \rho^2 }.  
	\end{align}
	The last inequality follows from the fact that for positive operators $\tr{PQ} \le \vert\vert P \vert \vert \tr{Q}$. 
	
\end{proof}

\section{Dynamics of two-time correlation functions}

\subsection{Preliminaries}
\label{app:Prelim}
Here we prove an intermediate lemma required for Appendix \ref{app:timescaleCorr}. This is similar to results found in ~\cite{Malabarba14,Garcia-PintosPRX2017}, improving on them by a numerical factor. Let us first recall the following definition from the main text. 

\begin{customdef}{2}
	\label{def:xi}
	Given a normalized distribution $p_\alpha$ over $G_\alpha$, we define $\xi_p(x)$ as the maximum weight that fits an interval of energy gaps with width $x$:
	\begin{align}
	\xi_p(x) \equiv \max_{\lambda} \sum_{\alpha : G_\alpha \in \left[ G_\lambda, G_\lambda + x \right] } p_\alpha.
	\end{align}
\end{customdef}   
With it, we have a general upper bound that will be central to the later proofs.

\begin{lemma}
	\label{lem:avgbound}
	Let $f(t)$ be a positive function of the form 
	\begin{equation}
	f(t) = \sum_{\alpha,\beta}p_\alpha p_\beta e^{i\left(G_{\alpha}- G_{\beta} \right)t},
	\end{equation}
	such that the $p_\alpha$  form a discrete probability distribution $\sum_\alpha p_\alpha = 1$. The uniform average of such a function is upper bounded by, 
	\begin{equation}
	\langle f(t) \rangle_T \equiv \int_0^T \frac{\text{d}t}{T} f(t)  \leq 3 \pi \xi_p\left(\frac{1}{T}\right), 
	\end{equation}
	where $\xi_p(x)$ is as in Definition \ref{def:xi}. 
\end{lemma}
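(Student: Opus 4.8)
The plan is to reduce the uniform time average to an integral of $f$ against a smooth, fast-decaying weight and then read the bound off the frequency representation, using Definition~\ref{def:xi} only at the very end.

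First I would use that $f$ is manifestly nonnegative, $f(t)=\big|\sum_\alpha p_\alpha e^{iG_\alpha t}\big|^2\ge 0$, so that for \emph{any} nonnegative weight $w$ with $w(t)\ge \tfrac1T$ on $[0,T]$ one has $\langle f\rangle_T=\tfrac1T\int_0^T f\le \int_0^\infty w(t)\,f(t)\,dt$. The sharp choice $w=\tfrac1T\mathbbm{1}_{[0,T]}$ is useless here, because its Fourier transform decays only like $1/|\omega|$ and the resulting frequency sum diverges; instead I would take the triangular (Fejér-type) weight $w(t)=\tfrac2T\big(1-\tfrac{t}{2T}\big)\mathbbm{1}_{[0,2T]}(t)$, which still obeys $w(t)\ge\tfrac1T$ on $[0,T]$ (there $1-\tfrac{t}{2T}\in[\tfrac12,1]$) and whose transform decays like $1/\omega^2$.

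Next I would pass to the energy-gap representation. Computing $\int_0^{2T}w(t)e^{i\omega t}\,dt$ explicitly and taking real parts — legitimate since $\int w f$ is real and the sum over pairs is invariant under $\alpha\leftrightarrow\beta$, which conjugates each term — one obtains $\int_0^{2T}w(t)f(t)\,dt=2\sum_{\alpha,\beta}p_\alpha p_\beta\,\operatorname{sinc}^2\!\big((G_\alpha-G_\beta)T\big)$ with $\operatorname{sinc}(x)=\sin x/x$; every term is nonnegative, consistent with $\int w f\ge 0$. It then suffices to bound $S_\alpha:=\sum_\beta p_\beta\operatorname{sinc}^2\!\big((G_\beta-G_\alpha)T\big)$ by a fixed constant times $\xi_p(1/T)$, uniformly in $\alpha$, since summing over $\alpha$ with the weights $p_\alpha$ (which sum to one) gives a factor $1$. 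For this I would use a layer-cake decomposition $S_\alpha=\int_0^1 P_\alpha(s)\,ds$ with $P_\alpha(s)=\sum_{\beta:\,\operatorname{sinc}^2((G_\beta-G_\alpha)T)>s}p_\beta$: because $\operatorname{sinc}^2(y)\le\min(1,y^{-2})$, the super-level set is contained in the $G$-interval of width $2/(T\sqrt s)$ about $G_\alpha$, which is covered by $\lceil 2/\sqrt s\rceil$ windows of width $1/T$, so Definition~\ref{def:xi} gives $P_\alpha(s)\le\min\!\big(1,\,(2/\sqrt s+1)\,\xi_p(1/T)\big)$; integrating in $s$ (the $1/\sqrt s$ tail is integrable, and near $s=1$ the super-level set is a short main-lobe interval) yields $S_\alpha\le\tfrac{3\pi}{2}\,\xi_p(1/T)$, hence $\langle f\rangle_T\le 2\cdot\tfrac{3\pi}{2}\,\xi_p(1/T)=3\pi\,\xi_p(1/T)$.

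The main obstacle is exactly this last estimate: convergence of $\int_0^1 P_\alpha(s)\,ds$ is precisely what the $1/\omega^2$ decay of the triangular kernel's transform buys (so one genuinely cannot work with $\mathbbm{1}_{[0,T]}$), and squeezing the constant down to the clean value $3\pi$ — rather than some larger number — requires the near-optimal kernel (effective "mass'' $2$) together with careful bookkeeping that keeps the $\min(1,\cdot)$ where it matters instead of bounding crudely by $(2/\sqrt s+1)\xi_p$ on the whole range. The Fourier computation and the kernel-domination step are routine; it is the optimization of these numerical factors that is the delicate part.
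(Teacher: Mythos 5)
Your proof is structurally sound and takes a genuinely different route from the paper's. The paper dominates the uniform density $\tfrac1T\mathbbm{1}_{[0,T]}$ pointwise by a scaled Gaussian centred at $T/2$, which turns the time average into $\gamma\sum_{\alpha\beta}p_\alpha p_\beta e^{-\sigma^2\Delta G^2/2}$; it then slices $|\Delta G|$ into annuli of width $1/T$ via the inequality $r^{-x^2}\le\sum_n r^{-n^2}g(|x|-n)$, charges each annulus $2\xi_p(1/T)$, and sums the resulting Jacobi theta series, optimizing the Gaussian width to get a constant $\approx 2.86\pi<3\pi$. You instead dominate by a one-sided triangular (Fej\'er) weight of total mass $2$, which produces the nonnegative kernel $\operatorname{sinc}^2(\Delta G\,T)$, and then run a layer-cake argument on its super-level sets. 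Both proofs are instances of the same idea --- replace the sharp window by a majorant whose transform is nonnegative and decays fast enough that width-$1/T$ frequency windows can be counted --- but your kernel computation is cleaner (no free parameter to optimize, an exact closed form for the transform) and the layer-cake step is arguably more transparent than the theta-function bookkeeping.

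There is, however, a concrete gap in the final constant. From the envelope $\operatorname{sinc}^2(y)\le\min(1,y^{-2})$ you get $P_\alpha(s)\le\min\bigl(1,(2/\sqrt s+1)\,\xi_p(1/T)\bigr)$, and even keeping the $\min$ exactly the integral evaluates to $\int_0^1\bigl(2/\sqrt s+1\bigr)ds\cdot\xi_p-O(\xi_p^2)=5\,\xi_p(1/T)-O(\xi_p^2)$, not $\tfrac{3\pi}{2}\,\xi_p(1/T)\approx4.71\,\xi_p(1/T)$. So the argument as written delivers $\langle f\rangle_T\le10\,\xi_p(1/T)$, which narrowly misses $3\pi\approx9.42$. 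To reach $\tfrac{3\pi}{2}$ per $S_\alpha$ you must use the true super-level sets of $\operatorname{sinc}^2$ rather than its envelope: their total length integrates (by layer cake and $\int\operatorname{sinc}^2=\pi$) to $\pi$ rather than $4$, and the covering overhead is $\int_0^1 k(s)\,ds=1+2\sum_{n\ge1}(\text{$n$-th side-lobe height})\approx1.19$ rather than $1$, giving $S_\alpha\le(\pi+1.19)\,\xi_p(1/T)\le\tfrac{3\pi}{2}\,\xi_p(1/T)$. That computation is elementary but it is not the one you wrote down; as stated, the claim ``integrating in $s$ yields $S_\alpha\le\tfrac{3\pi}{2}\xi_p(1/T)$'' does not follow. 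Either carry out the lobe-by-lobe accounting explicitly or accept the slightly weaker constant $10$; everything else in your argument is correct.
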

\begin{proof}
	Let us first define the uniform and Gaussian probability density functions, as 
	\begin{equation}
	\label{eq:uniform}
	p_T(t) =  \left\{
	\begin{array}{cl}
	\frac{1}{T} & t\in [0,T],  \\
	0 &\text{otherwise}.
	\end{array}\right.
	\end{equation} 
	and
	\begin{equation} \label{eq:gaussian}
	p_G(t)  = 	\frac{1}{\sqrt{2\pi \alpha^2T^2}} e^{-\frac{(t-T/2)^2}{2\alpha^2T^2}}, \enspace t\in R,
	\end{equation}
	respectively, 
	where the mean of the Gaussian is written as $\mu = \frac{T}{2}$ and the standard deviation as $\sigma = \alpha T$. The parameter $\alpha>0$ is free for us to choose.
	We will first show that for any positive function $f(t)  \geq 0$, the uniform average  of a function on the interval $t\in [0,T]$ can be bounded tightly by the Gaussian average by, 
	\begin{equation}\label{eq:distbound}
	\langle  f(t) \rangle_T \leq \gamma \langle f(t) \rangle_{G_T},
	\end{equation}
	where $\gamma = \sqrt{2\pi} \alpha e^{\frac{1}{8\alpha^2}}$, and we denote the uniform and Gaussian averages over distributions $p_T(t)$ and $p_G(t)$ as $\langle  f(t) \rangle_T$ and $\langle f(t) \rangle_{G_T}$ respectively. 
	
	To do this, note that due to the choice of $\mu = \frac{T}{2}$ the two distributions have identical means. To find the smallest possible $\gamma$ such that \eqref{eq:distbound} holds we set that $p_T(t=T) = \gamma p_{G}(t=T)$. 
	This then implies,
	\begin{align}
	\frac{1}{T} = \gamma \frac{1}{\sqrt{2 \pi} \alpha T} e^{-\frac{1}{8 \alpha^2}} \implies \gamma = \sqrt{2\pi} \alpha e^{\frac{1}{8\alpha^2}},
	\end{align}
	which guarantees that $p_T(t) \le \gamma p_G(t)\,\,\forall t$. Since over the interval $[0,T]$ $p_T(t)$ is constant and $p_G(t)\ge p_G(T)$,it follows that
	\begin{equation}
	\langle  f(t) \rangle_T \leq \gamma \langle f(t) \rangle_{G_T}.
	\end{equation}
	With this we can now bound the uniform average of $f(t)$ as
	\begin{equation}
	\langle f(t) \rangle_T = \frac{1}{T} \int_{0}^T \sum_{\alpha, \beta}p_{\alpha} p_{\beta} e^{i\left(G_{\alpha}- G_{\beta} \right)t}dt \leq   \frac{\gamma}{\sqrt{2\pi \alpha^2T^2}} \int_{-\infty}^\infty \sum_{\alpha, \beta}p_{\alpha} p_{\beta} e^{i\left(G_{\alpha}- G_{\beta} \right)t} e^{-\frac{(t-T/2)^2}{2\alpha^2T^2}} dt.
	\end{equation}
	Interchanging the integral with the sum and then integrating gives the characteristic function of a Gaussian distribution. This allows us to write, 
	\begin{equation} \label{eq:gaussform}
	\langle f(t) \rangle_T \leq \gamma \sum_{\alpha\beta} p_\alpha p_\beta e^{ - \frac{\sigma^2 \Delta G^2}{2}},
	\end{equation}
	where $\Delta G = G_\alpha-G_\beta$. To further bound Eq. \eqref{eq:gaussform} we may introduce the function, 
	\begin{equation}
	g(x) = \left\{
	\begin{array}{cl}
	1 & \text{ if } x\in[0,1),  \\
	0 &\text{otherwise}.
	\end{array}\right.
	\end{equation}
	This function satisfies
	\begin{equation}
	r^{-x^2} \leq \sum_{n=0}^{\infty} r^{-n^2} g(|x|-n)
	\end{equation}
	where $r > 1$. Applying this bound to Eq. \eqref{eq:gaussform} with $r =  e^{\frac{\alpha}{2}}$ and $x = \Delta G^2 T^2$, it follows that 
	\begin{equation} \label{eq:gstep}
	\langle f(t) \rangle_T \leq \gamma \sum_{n=0}^{\infty} r^{-n^2} \sum_{\alpha,\beta} p_\alpha p_\beta g\left(\Delta G^2 T^2-n\right).
	\end{equation}
	The presence of $g(x)$ allows to restrict the summation in Eq.~\eqref{eq:gstep}. With the condition $\left( \Delta G^2 T^2-n \right) \in [0,1)$, the two following intervals can be identified for the variable $G_\beta$ given a particular value of $G_\alpha$,
	\begin{align}
	G_\beta \in I_+ =  \left[ G_\alpha+ \frac{\sqrt{n}}{T}, G_\alpha+\frac{\sqrt{n+1}}{T} \right), \\ G_\beta \in I_- = \left( G_\alpha -  \frac{\sqrt{n+1}}{T}, G_\alpha+\frac{\sqrt{n}}{T} \right].
	\end{align} 
	With this we rewrite the summation in the bound as,
	\begin{equation}
	\langle f(t) \rangle_T \leq \gamma \sum_{n=0}^{\infty} r^{-n^2} \sum_{\alpha}p_\alpha \left( \sum_{G_\beta \in I_+} p_\beta + \sum_{G_\beta \in I_-} p_\beta \right).
	\end{equation}
	Given that $\sqrt{n} \leq n$ for all integer $n$, one can expand the length of the interval, so that
	\begin{equation}
	\langle f(t) \rangle_T \leq \gamma \sum_{n=0}^{\infty} r^{-n^2} \sum_{\alpha}p_\alpha \left( \sum_{G_\beta \in I'_+} p_\beta + \sum_{G_\beta \in I'_-} p_\beta \right) ,
	\end{equation}
	where we define
	\begin{align}
	I'_+ =  \left[ G_\alpha+ \frac{n}{T}, G_\alpha+\frac{n+1}{T} \right), \\
	I'_- = \left( G_\alpha -  \frac{n+1}{T}, G_\alpha+\frac{n}{T} \right].
	\end{align}
	It is straightforward to see that $\sum_{G_\beta \in I'_+} p_\beta + \sum_{G_\beta \in I'_-} p_\beta  \le 2 \xi_p\left(\frac{1}{T}\right)$, which allows us to further bound the summation by using Def. \ref{def:xi},
	\begin{equation}
	\langle f(t) \rangle_T \leq 2 \gamma \xi_p\left(\frac{1}{T}\right) \sum_{n=0}^\infty r^{-n^2}.
	\end{equation}
	To complete the proof we find the $\alpha$ that minimizes
	\begin{equation}
	\kappa(\alpha) = \gamma \sum_{n=0}^\infty r^{-n^2} = \gamma \left( \Theta_3(0,r^{-1})+1  \right),
	\end{equation}
	which occurs for $\alpha \approx 0.6347$. This gives $\kappa \approx 2.8637<3$. Finally,
	\begin{equation}
	\langle f(t) \rangle_T \leq 3 \pi \xi_p\left(\frac{1}{T}\right),
	\end{equation} 
	which completes the proof.

\end{proof}

\subsection{Proof of equilibration timescales of correlation functions}
\label{app:timescaleCorr}

\begin{customthm}{3}\label{thm:timescalesCorr}
	For any Hamiltonian $H$ and state $\rho$ such that $[H,\rho] = 0$, and any observable $A$, the time correlation function $C^A(t) = \tr{\rho A(t) A }$ satisfies
	\begin{align}
	\label{eq:boundCorrAveraged2}
	\frac{1}{T}\int_0^T \frac{ |C^A(t) - C^A_\infty|^2  }{\left(C^A(0)\right)^2} dt  &\le 3 \pi    \left( \frac{a(\epsilon)}{\sigma_G} \frac{1}{T} + \delta(\epsilon) \right)   
	\end{align}
	for any $\epsilon>0$. Here, $a(\epsilon)$ and $\delta(\epsilon)$ are as in Definition~\ref{def:xi} for the normalized distribution $v_\alpha \equiv \frac{\rho_{jj} |A_{jk}|^2}{C^A(0)}$, and 
	$\sigma_G$ is given by
	\begin{align}
	\sigma_G^2 &= \frac{ 1 }{ C^A(0) } \tr{\rho [A ,H] [H,A ]}  - \frac{\tr{\rho [H,A ] A }^2}{\left(C^A(0)\right)^2}.
	\end{align}
\end{customthm}

\begin{proof}
	Following Lemma \ref{lem:avgbound}, we have that
	\begin{align} \label{eq:testedbound}
	\frac{1}{T}\int_0^T \frac{ |C_t - C_\infty|^2  }{(C^A(0))^2} dt &\le 3 \pi   \xi_v\left( \tfrac{1}{T}\right),   
	\end{align}
	where $\xi_v\left( \tfrac{1}{T}\right)$ is taken with respect to the distribution $v_\alpha$, which is normalized:
	\begin{align}
	\sum v_\alpha~=~\sum_{jk} \frac{\rho_{jj} |A_{jk}|^2}{C^A(0)}~=~\frac{\tr{\rho  A^2}}{C^A(0)} = 1.
	\end{align}
	
	It was shown in~\cite{Garcia-PintosPRX2017} (Proposition 5) that the function $\xi_p(x)$ satisfies 
	\begin{align}
	\label{eqaux:xibound}
	\xi_p(x) \le \frac{a(\epsilon)}{\sigma_G} x + \delta(\epsilon),       
	\end{align}
	with $a(\epsilon) = \frac{ \xi_p(\epsilon) }{\epsilon} \sigma_G$ and $\delta(\epsilon) = \xi_p(\epsilon)$, and where $\varepsilon \in (0,\infty)$.
	Finally, the standard deviation of the distribution $p_v$ is 
	\begin{align}
	\sigma_G^2 &=  \sum_\alpha p_\alpha G_\alpha^2 - \left( \sum_\alpha p_\alpha G_\alpha \right)^2  \nonumber \\
	& = \sum_{jk} \frac{ \rho_{jj} |A_{jk}|^2 }{ C^A(0) } (E_j - E_k)^2   - \left( \sum_{jk} \frac{ \rho_{jj} |A_{jk}|^2 }{ C^A(0) } (E_j - E_k)   \right)^2 \nonumber \\
	& = \frac{ 1 }{ C^A(0) } \tr{\rho [A ,H] [H,A ]} - \frac{\tr{\rho [H,A ] A }^2}{(C^A(0))^2},
	\end{align}
	which completes the proof.
\end{proof}

\subsection{Proof of equilibration timescales of symmetric correlation functions}
\label{app:timescaleCorrSymm}
\begin{customthm}{4}
	For any Hamiltonian $H$ and state $\rho$ such that $[H,\rho] = 0$, and any observable $A$, the time correlation function $C_s^A(t) = \tr{\rho \{A, A(t)\} }$ satisfies
	\begin{align}
	\frac{1}{T}\int_0^T \frac{ |C^A_s(t) - C^A_{s,\infty}|^2  }{(C^A_s(0))^2} dt  &\le 3 \pi    \left( \frac{a(\epsilon)}{\sigma_G} \frac{1}{T} + \delta(\epsilon) \right),   
	\end{align}
	for any $\epsilon>0$. Here, $a(\epsilon)$ and $\delta(\epsilon)$ are as in Definition \ref{def:xi}
	for the normalized distribution $v_\alpha^s \equiv \frac{\rho_{jj} + \rho_{kk}}{2} \frac{|A_{jk}|^2}{C^A_s(0)}$, and  
	\begin{align}
	\sigma_G^2 &= \frac{ 1 }{ C^A_s(0) } \tr{\rho [A_0,H] [H,A_0]}.
	\end{align}
	
\end{customthm}

\begin{proof}
	The symmetric correlation function is defined as
	\begin{align}
	C^A_s(t) \equiv \frac{1}{2} \tr{ \rho \left\{ A , A(t) \right\}}  = \frac{C^A(t) + C^A(t)^*}{2}.
	\end{align}
	The equivalent of Eq.\ref{eq:correlationfunction} 
	becomes
	\begin{align}
	\frac{C^A_s(t)} {C^A_s(0) } &= \sum_{j k} \frac{\rho_{jj} + \rho_{kk}}{2} \frac{|A_{jk}|^2}{C^A_s(0) } e^{-i(E_j - E_k)t},
	\end{align}
	The proof of Theorem \ref{thm:timescalesCorrSym}
	is identical as the previous proof, with the symmetrized distribution $v_\alpha^S \equiv \frac{\rho_{jj} + \rho_{kk}}{2} |A_{jk}|^2$.
	In this case the variance of the normalized distribution $v_\alpha^S$ becomes
	\begin{align}
	\sigma_G^2 &=  \sum_\alpha p_\alpha^S G_\alpha^2 - \left( \sum_\alpha p_\alpha^S G_\alpha \right)^2   \nonumber \\
	& = \sum_{jk} \frac{ \frac{\rho_{jj} + \rho_{kk}}{2} |A_{jk}|^2 }{ C^A_s(0) } (E_j - E_k)^2   - \left( \sum_{jk} \frac{ \frac{\rho_{jj} + \rho_{kk}}{2} |A_{jk}|^2 }{ C^A_s(0) } (E_j - E_k)   \right)^2   \nonumber \\
	& = \frac{ 1 }{ C^A_s(0) } \tr{\rho [A ,H] [H,A ]}.
	\end{align}
\end{proof}

Equilibration then occurs within a timescale
\begin{align}
T_{eq}  &=   \frac{\kappa \pi \, a(\epsilon) \, \sqrt{C^A_s(0)}}{\sqrt{ \tr{\rho [A ,H] [H,A ]} }}, 
\end{align}
The denominator in $T_{eq}$ can be identified as an ``acceleration'' of the symmetric autocorrelation function. 
Indeed, 
\begin{align}
\frac{d^2 C^A_s(t) }{dt^2} = - \sum_{j k} (E_j - E_k)^2 \frac{\rho_{jj}+\rho_{kk}}{2} |A_{jk}|^2 e^{-i(E_j - E_k)t}.
\end{align}
Then, the equilibration timescale is
\begin{align}
T_{eq} =     \frac{\kappa \pi \, a(\epsilon) \, \sqrt{C^A_s(0)}}{\sqrt{ \left| \frac{d^2 C^A_s(t) }{dt^2} \big\vert_0    \right| }}. 
\end{align}

\subsubsection{Short-time evolution of symmetric correlation functions}

The symmetric autocorrelation functions is given by
\begin{align}
C^A_s(t) &= \sum_{j k} \frac{\rho_{jj} + \rho_{kk}}{2}  |A_{jk}|^2 e^{-i(E_j - E_k)t}.
\end{align}
Taking the Taylor expansion of $C^A_s(t)$,
\begin{align}
C^A_s(t) &= C^A_s(0)  - \left( i \sum_{j k} (E_j - E_k  ) 
\frac{\rho_{jj} + \rho_{kk}}{2}      |A_{jk}|^2 \right) t \nonumber \\
&- \left(  \sum_{j k} (E_j - E_k)^2  \frac{\rho_{jj} + \rho_{kk}}{2}  |A_{jk}|^2  \right) \frac{t^2}{2} +\mathcal{O}(t^3) \nonumber \\
&=C^A_s(0)- \frac{d^2 C^A_s }{dt^2} \bigg\vert_0 \frac{ t^2 }{2}  +\mathcal{O}(t^3) \nonumber \\
&=C^A_s(0) \left(  1 -    \frac{1}{2 C_0 } \frac{d^2 C^A_s }{dt^2} \bigg\vert_0   t^2  \right)  +\mathcal{O}(t^3).
\end{align}
For early times, the above expression decays on a timescale 
\begin{align}
\tau \equiv \frac{\sqrt{2}\sqrt{C^A_s(0)}}{ \sqrt{ \left| \frac{d^2 C^A_s(t) }{dt^2} \Big\vert_0    \right| } }.
\end{align}

\subsection{Proof of equilibration timescales of Kubo correlation functions}
\label{app:timescaleKubo}

\begin{customthm}{5}
	For any Hamiltonian $H$, thermal state $\rho$, and any observable $A$, the Kubo correlation function $C_{\textnormal{Kubo}}$ satisfies
	\begin{align}
	\label{eq-app:boundKubo}
	\frac{1}{T}\int_0^T \frac{ |C_{\textnormal{Kubo}}(t) - C_{\textnormal{Kubo},\infty} |^2  }{C_{\textnormal{Kubo}}(0)^2} dt  &\le 3 \pi    \left( \frac{a(\epsilon)}{\sigma_G} \frac{1}{T} + \delta(\epsilon) \right),   
	\end{align}
	for any $\epsilon>0$. Here, $a(\epsilon)$ and $\delta(\epsilon)$ are as in Definition~\ref{def:xi} 
	for the normalized distribution $w_\alpha \equiv \frac{e^{-\beta E_k}-e^{-\beta E_j}}{E_j-E_k} \frac{ \vert A_{jk}\vert^2 }{C_{\textnormal{Kubo}}(0)} $, and 
	\begin{align}
	\sigma_G^2 &= \frac{ 1 }{ \Ckubo(0) } \tr{ [A,\rho] [A,H]}.
	\end{align}
\end{customthm}

\begin{proof}
	The Kubo correlation function can be written as
	\begin{align}
	C_{\text{Kubo}}(t) \propto \sum_{j\neq k}\frac{e^{-\beta E_k}-e^{-\beta E_j}}{E_j-E_k} \vert A_{jk}\vert^2 e^{i t(E_j-E_k)},
	\end{align}
	with the proportionality constant defined by $ C_{\text{Kubo}}(0) = 1$.
	We can then write
	\begin{align}
	\frac{C_{\text{Kubo}}(t)-C_{\text{Kubo}}(\infty)}{C_{\text{Kubo}}(0)} \nonumber &=\frac{\sum_{j \neq k}\frac{e^{-\beta E_k}-e^{-\beta E_j}}{E_j-E_k} \vert A_{jk}\vert^2 e^{i t(E_j-E_k)}}{\sum_{j,k}\frac{e^{-\beta E_k}-e^{-\beta E_j}}{E_j-E_k} \vert A_{jk}\vert^2} 
	\\ & \equiv \frac{\sum_{\alpha \neq 0} w_\alpha e^{-i t G_\alpha}}{\sum_\alpha w_\alpha},
	\end{align}
	where we define $w_\alpha \equiv \frac{e^{-\beta E_k}-e^{-\beta E_j}}{E_j-E_k} \vert A_{jk}\vert^2\ge 0$.
	Given that $w_\alpha \ge 0$, we can perform similar calculations as for Theorem~\ref{thm:timescalesCorr}, 
	albeit with a different probability distribution. 
	Thus, we also have
	\begin{align}
	\left \langle\left(\frac{C_{\text{Kubo}}(t)-C_{\text{Kubo}}(\infty)}{C_{\text{Kubo}}(0)} \right)^2 \right \rangle_T \le 3 \pi \xi_w (1/T).
	\end{align}
	Defining the normalized distribution $q_\alpha \equiv w_\alpha/\Ckubo(0)$. The variance is 
	\begin{align}
	\sigma_G^2 &=  \sum_\alpha q_\alpha G_\alpha^2 - \left( \sum_\alpha q_\alpha G_\alpha \right)^2  \nonumber \\
	& = \sum_{jk} \frac{1}{\Ckubo(0)} \frac{ e^{-\beta E_k}-e^{-\beta E_j} } {E_j-E_k} \vert A_{jk}\vert^2  (E_j - E_k)^2   - \left( \sum_{jk}  \frac{1}{\Ckubo(0)} \frac{ e^{-\beta E_k}-e^{-\beta E_j} } {E_j-E_k} \vert A_{jk}\vert^2 (E_j - E_k)   \right)^2 \nonumber \\
	& = \frac{1}{\Ckubo(0)} \sum_{jk}   \left(e^{-\beta E_k}-e^{-\beta E_j} \right)   \vert A_{jk}\vert^2 (E_j - E_k)    \nonumber \\
	& = \frac{ 1 }{ \Ckubo(0) } \tr{ [A,\rho] [A,H]} ,
	\end{align}
	completing the proof.
\end{proof}

\subsection{Scaling of $a$ and $\delta$}
\label{app:aanddelta}

The proofs of Theorems~(3-5)
rely on the fact that the function 
$\xi_p(x)$,
defined for any normalized distribution $p_\alpha$ as  the maximum distribution that fits an interval $x$ 
\begin{align}
\xi_p(x) \equiv \max_{\beta} \sum_{\alpha : G_\beta \in \left[ G_\beta, G_\beta + x \right] } p_\alpha,
\end{align}
satisfies
\begin{align}
\label{eqaux:xibound}
\xi_p(x) \le \frac{a(\epsilon)}{\sigma_G} x + \delta(\epsilon),       
\end{align}
which was shown in~\cite{Garcia-PintosPRX2017} (Proposition 5). Here $a(\epsilon) \equiv \frac{ \xi_p(\epsilon) }{\epsilon} \sigma_G$ and $\delta(\epsilon) \equiv \xi_p(\epsilon)$, where
$\sigma_G$ is the standard deviation of the distribution $p_\alpha$.
The function $a(\epsilon)$ ends up in the bound of the equilibration timescales, as $T_{eq} =     \frac{\pi \, a(\epsilon)}{\sigma_G}$, while $\delta(\epsilon)$ governs the long time behavior in the bounds. 

Given that $\xi_p(x)$ characterizes how much of the distribution $p_\alpha$ fits an interval $x$, the value of $a$ in Eq.~\eqref{eqaux:xibound} depends on how well $1/\sigma_G$ serves to characterize the region where the distribution $p_\alpha$ is supported.
Roughly speaking, whenever $1/\sigma_G$ is a good estimate of the width of such small region, then one expects $a \sim \mathcal{O}(1)$. This is well illustrated when considering a unimodal distribution (e.g. a Gaussian). In such case, the fraction of the distribution that fits an interval $x$ is roughly $x$ times the width $1/\sigma_G$ of the window where the distribution is supported, and $\xi_p(x) \sim x/\sigma_G$, so that $a \sim \mathcal{O}(1)$.
Multimodal distributions violate such condition, as for them the standard deviation does not characterize the regions in which the distribution has considerable support.
At the same time, $\delta$ in Eq.~\eqref{eqaux:xibound} carries information of the fine structure of $p_\alpha$, indicating the scale at which the distribution can no longer be coarse-grained to a continuous distribution. The only way that $\delta \ll 1$ fails is for distributions that are not smooth, in which a small region of width $\epsilon$ is significantly populated. Thus, for distributions that are smooth in a coarse-grained sense, and approximately unimodal, one expects to be able to find a small enough $\epsilon$  such that $\delta(\epsilon )  \ll 1 $ and $a(\epsilon) \sim \mathcal{O}(1)$.

In summary, the problem of proving fast equilibration timescales in our approach can thus be linked to knowing whether the relevant distribution $p_\alpha$ is `approximately unimodal'.
We argue that for an strongly interacting many-body system this will typically be the case. 
Consider for instance the case of Theorem~\ref{thm:timescalesCorr},
where the relevant distribution is given by $v_\alpha \equiv \frac{\rho_{jj} |A_{jk}|^2}{C^A(0)}$.

The large number of energy gaps present in a many-body system implies a dominance of small gaps over larger ones, which favors that, on a coarse-grained sense, the distribution over gaps shows a decay as the size $|G_\alpha|$ of the gap increases. This is reinforced by the tendency of off-diagonal matrix elements $|A_{jk}|$ of local observables to decay as the levels considered are further apart.
Existing numerical results on off-diagonal matrix elements of local observables in non-integrable models are consistent with all the requirements listed here \cite{Beugeling15,Mondaini17}.
The present arguments suggest distributions $v_\alpha$ that decay for larger values of $|G_\alpha|$, and are therefore unimodal, and also smoothly distributed. This is confirmed in the simulations in Appendix~\ref{app:simulation} in a non-integrable model on Fig.~\ref{fig:hist} (left), and to a somewhat lesser extent in an integrable model too on Fig.~\ref{fig:hist} (right).

\section{Simulations}
\label{app:simulation}
To calculate the function given in Definition~\ref{def:xi} in the main text
exactly one needs to find the maximum sum of $p_\alpha$ such that $\alpha:G_\alpha \in [G_\lambda, G_\lambda+x]$. This calculation scales quite unfavourably with system size. If we have $N$ energies, the number of intervals one must probe is quadratic in $N$. For each $x \sim \mathcal{O}(10^{-1})$ the intervals near the center are quite dense, making the entire algorithm for one choice of $x$ approximately  scale like  $\mathcal{O}(N^3)$. For this reason, we exactly diagonalize the Hamiltonian given in Eq.~\ref{eq:hamiltonian}, 
and numerically approximate $\xi(x)$. This is done by means of a Monte Carlo scheme where we randomly select intervals defined by $G_\lambda$ using a normal distribution defined by $\mu_G = \sum_\alpha p_\alpha G_\alpha$ and $\sigma_G$ given in Definition~\ref{def:xi}. 

\begin{figure*}[h!]
	\centering
	\includegraphics[width=0.49\linewidth]{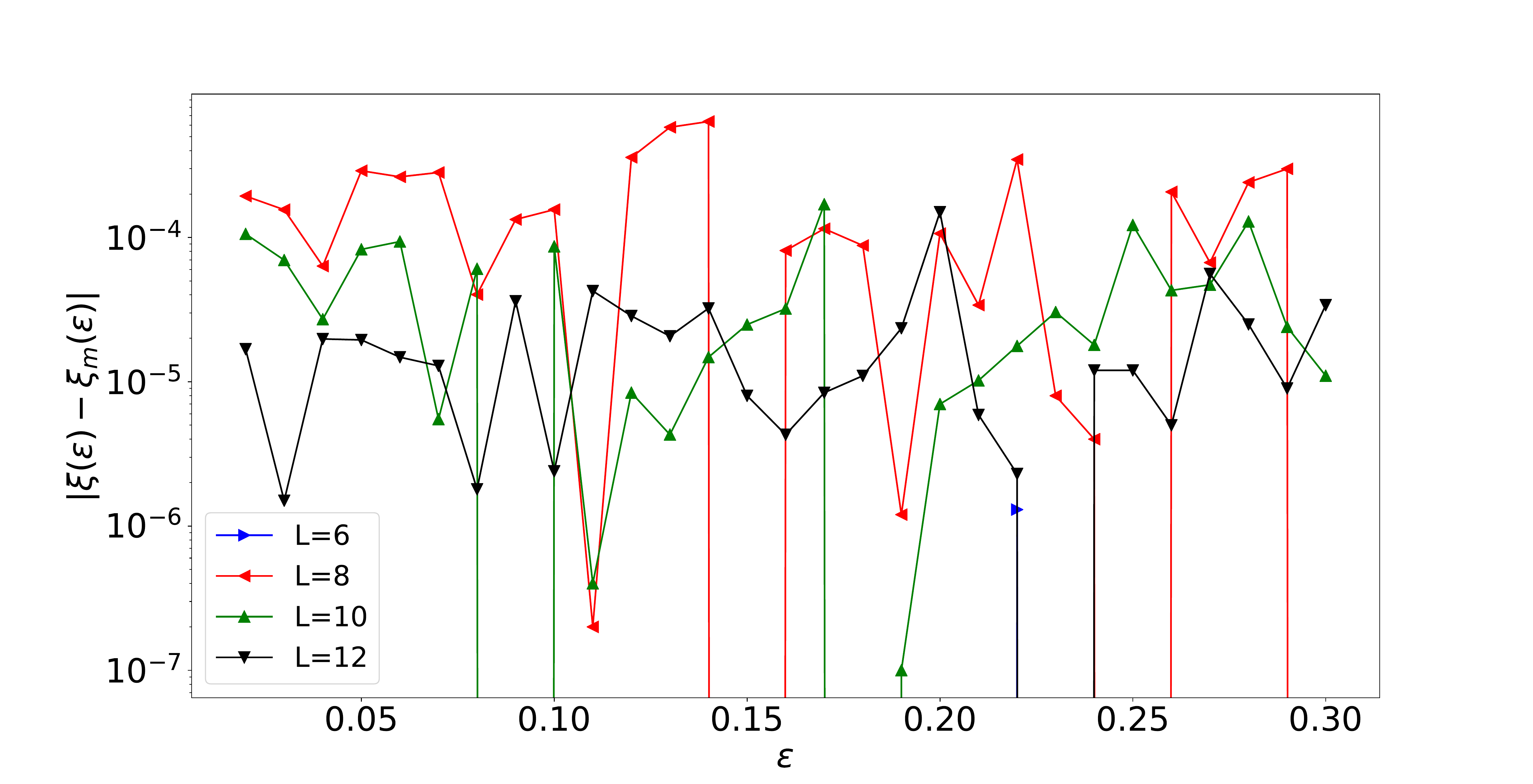}
	\includegraphics[width=0.49\linewidth]{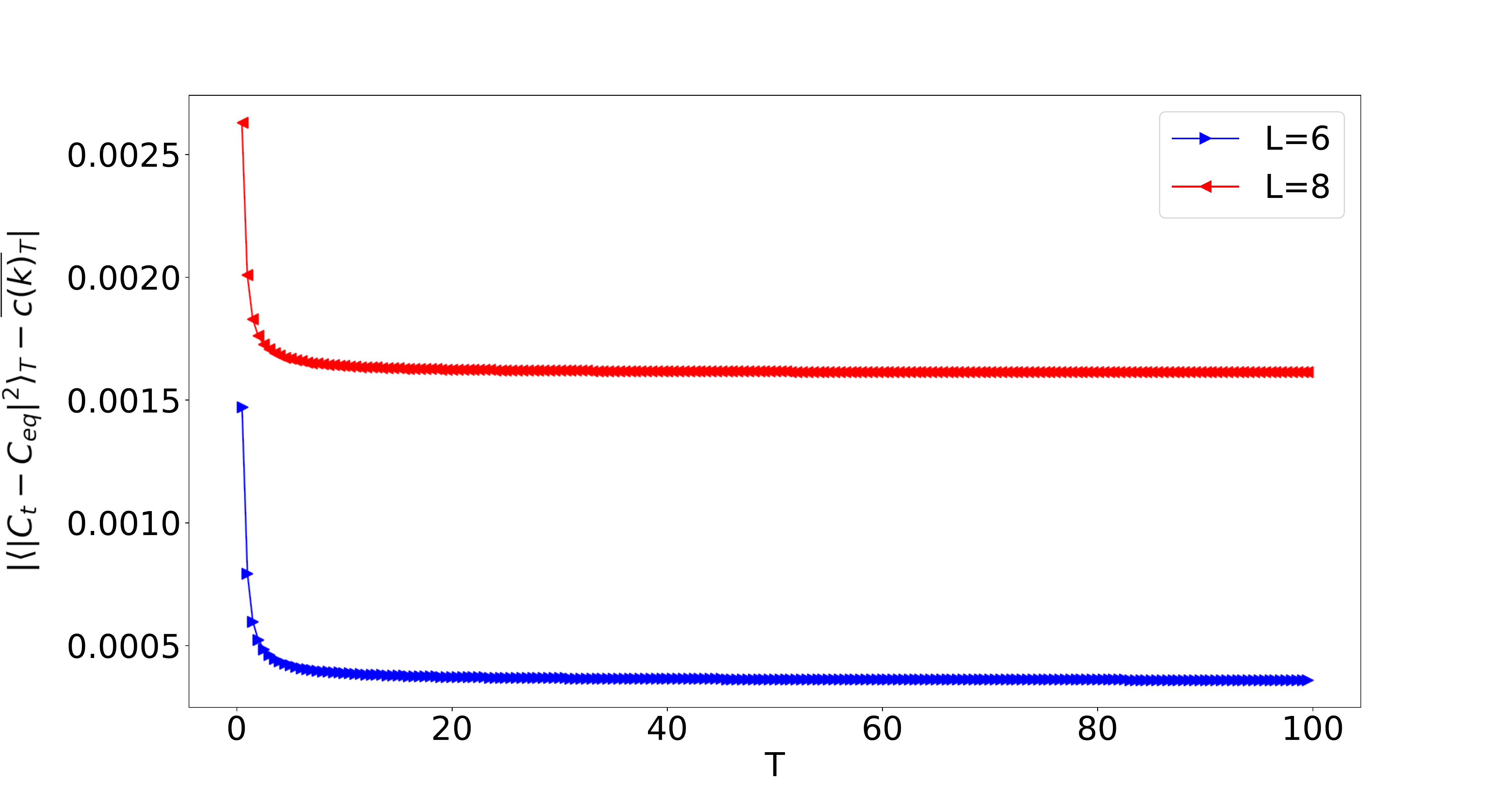}
	\caption{(left) Forward error plot depicting the accuracy of the Monte Carlo scheme at different values of system size. $\xi(\epsilon)$ is exactly calculated and $\xi_m(\epsilon)$ is calculated using the Monte Carlo scheme with 10,000 samples per value of $\epsilon$.
		(right) Forward error plot depicting the accuracy of the integral approximation in calculating the left hand side of equation \ref{eq:boundCorrAveraged2} at various values of time and system size. A step size in time of $\Delta t = 0.001$ was used.
	}
	\label{fig:numerical-checks}
\end{figure*}
Figure \ref{fig:numerical-checks} (left) depicts the accuracy of this scheme. Unsurprisingly $L=6$ is exactly calculated and is not visible on the except at one location. The other cases show the approximation scheme performs better at larger system sizes. Despite this improvement, the accuracy of the scheme roughly puts us accurate to the fourth digit in all cases, making this scheme more than accurate enough. 
Quantities such as $\mu_G$ and $\sigma_G$ can be calculated exactly given the exact diagonalization. However the left hand side of Eq.~\ref{eq:boundCorrAveraged}
in the main text has a time order complexity of $\mathcal{O}(N^4)$, making it again extremely difficult to calculate exactly. To get around this, we simply define a grid $t_k = \Delta k$ where $k=0,1,2 \dots$ and average over the values calculated of $|C^A(t_k)-C_{\infty}^A|^2$ as, 
\begin{equation}
\overline{c(k)}_T = \frac{1}{k+1} \sum_{i=0}^k |C^A(t_i)-C_{\infty}^A|^2.
\end{equation}

Figure \ref{fig:numerical-checks} (right)  shows the forward error of this scheme, showing an expected first order accuracy in time. Since we have an accuracy which is satisfactory for the scales we are comparing with the bound which tends have a roughly $10^{-1}$ disagreement between the two sides of the bound. Finally to the optimal choice of $a(\epsilon)$ and $\delta(\epsilon)$. 
For the plots present in Figure~\ref{fig:boundplot} in the main text
we simply took the smallest $\delta(\epsilon)$ available to minimize the resolution of our bound and picked the corresponding $a(\epsilon)$. This choice has an obvious issue in the $L=8$ case but begins to be more favourable in the larger system sizes. Looking at Figure~\ref{fig:deltaplotexact}
we see the value $a(\epsilon)$ can grow quite quickly due to finite size effects, making the prefactor outside the $1/T$ term quite large.

\subsection{Integrable models}\label{app:integrable}
Next, this section provide an example of showing how our bounds on timescales are affected in integrable models, highlighting the negative effect of degeneracies. Suppose we choose to define our Hamiltonian of Eq.~\eqref{eq:hamiltonian}
in the main text with parameters $H = (-0.5,0, -0.5, 0 )$. This corresponds to an Ising model with a transverse field. The issue in general with this model comes from investigating the behaviour of the corresponding $\delta(\epsilon)$ and the fact that the frequencies $G_\alpha$ are very degenerate, meaning this function will not necessarily decay to zero as we take $\epsilon \to 0$.

\begin{figure}[h]
	\centering
	\includegraphics[width = 0.49 \linewidth]{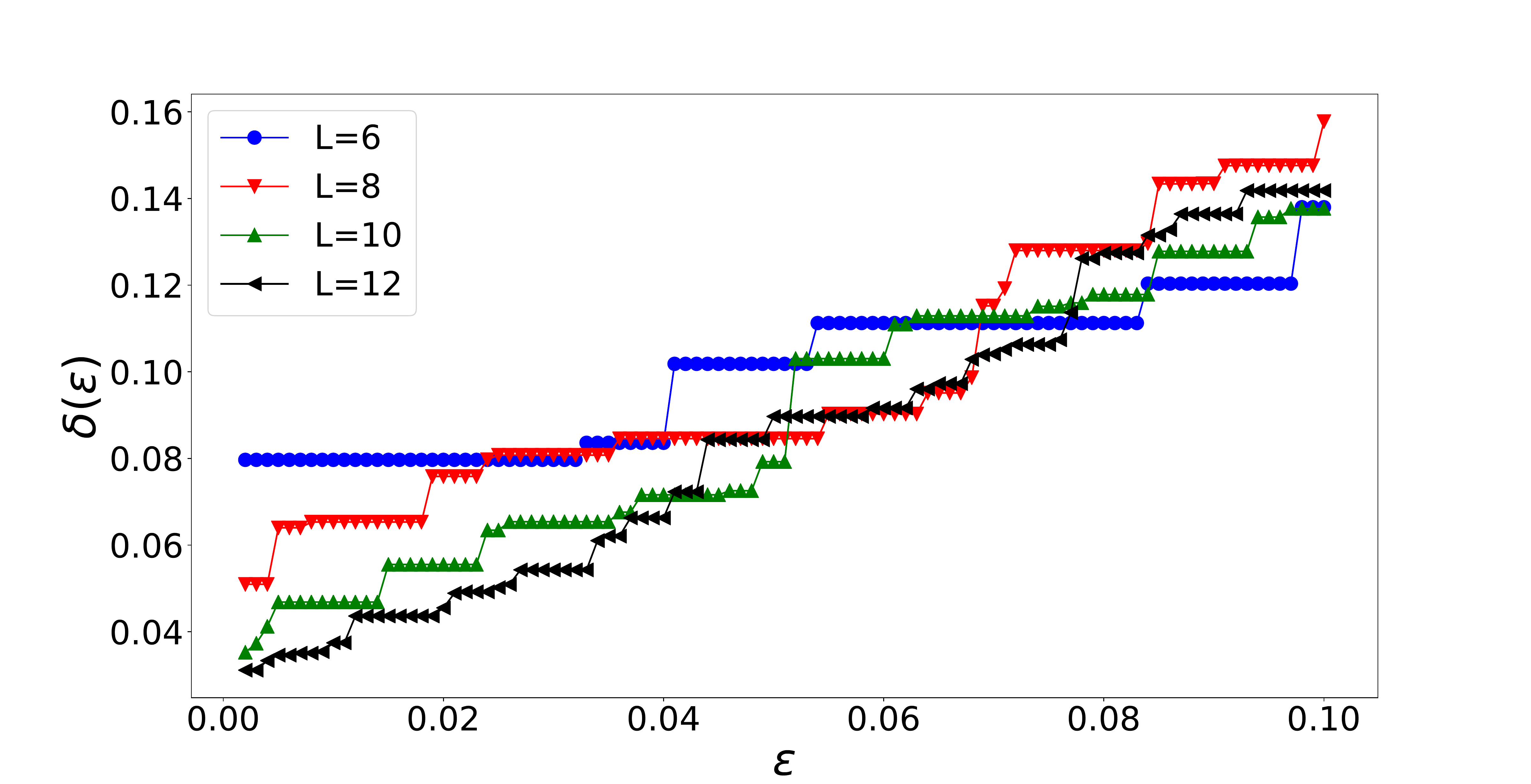}
	\includegraphics[width = 0.49\linewidth]{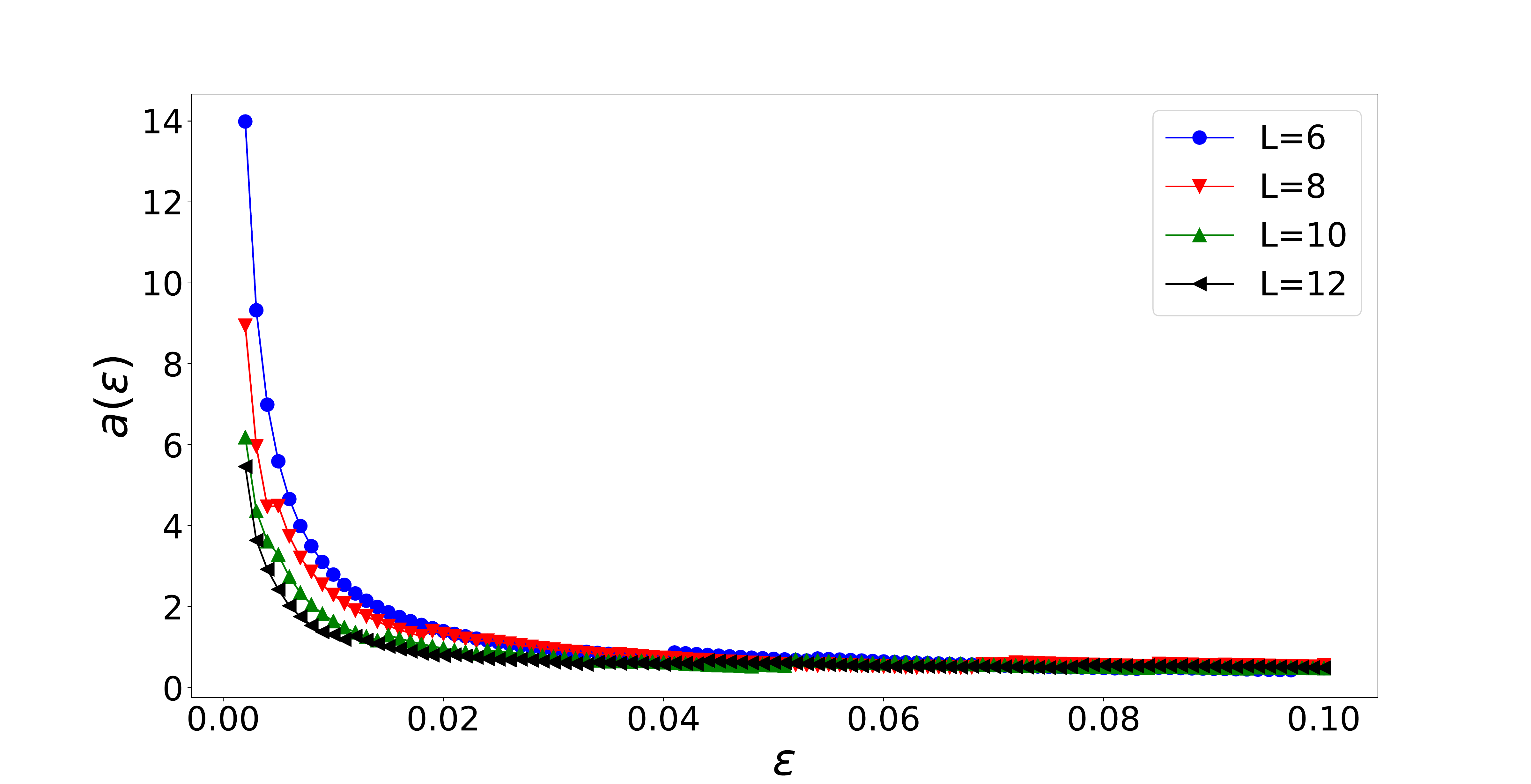}
	\caption{Plots of $\delta(\epsilon)$ (top) and $a(\epsilon)$ (bottom) for distribution $v_\alpha$ in Theorem~\ref{thm:timescalesCorr}, obtained by exact diagonalization, and a Monte Carlo approximation of the function from Definition~\ref{def:xi}. The plots share the same x-axis and were generated with 10,000 sampled frequency intervals. 
		Small values of $\delta$ imply equilibration occurs for long enough times, while the value of $a$ controls the prefactor in the equilibration timescale Eq.~\eqref{eq:boundCorrAveraged2}. } 
	\label{fig:deltaplotint}
	\label{fig:aplotint}
\end{figure}
In Figure \ref{fig:deltaplotint} we see the issue emerging with the bound found in Theorem~\ref{thm:timescalesCorr}. 
The degeneracy of the $G_\alpha$ terms cause the decrease in $\delta(\epsilon)$ to happen in discrete steps triggered by calculating $\delta(\epsilon)$ in a small enough region to differentiate two degenerate values of $G_\alpha$ which are close. Thus at small $\epsilon$ we still expect our resolution of equilibrium to be quite large. This slow decay of $\delta(\epsilon)$ also causes $a(\epsilon)$ to become quite large very quickly, as one needs $\delta(\epsilon)$ to be roughly linear for $a(\epsilon)$ to be reasonably small. This suggests that perhaps alternative approaches are required to bound the equilibration of two point time correlation functions in integrable models.

\subsection{Distribution of $v_\alpha \equiv \frac{\rho_{jj} |A_{jk}|^2}{C_0}$}

Finally, we show the distributions of $v_\alpha \equiv \frac{\rho_{jj} |A_{jk}|^2}{C_0}$ and comment on the differences between the integrable case  and the case that obeys the ETH. To proceed we define a coarse grained version $\bar{p}_\alpha$ of $v_\alpha \equiv \frac{\rho_{jj} |A_{jk}|^2}{C_0}$, where we define $n$ bins and bins, 
$b_1 = [G_{min}, G_{min}+ \Delta G], b_2 = [ G_{min} + \Delta G, G_{min} + 2\Delta G ], \dots $, where $\Delta G = \frac{G_{max}-G_{min}}{n}$. The coarse grained probability is then obtained by summing the associated probabilities, $\bar{p}_\alpha = \sum_{G_\beta \in b_\beta} v_\beta$.

\begin{figure}[h!]
	\centering
	\includegraphics[width=0.49\linewidth]{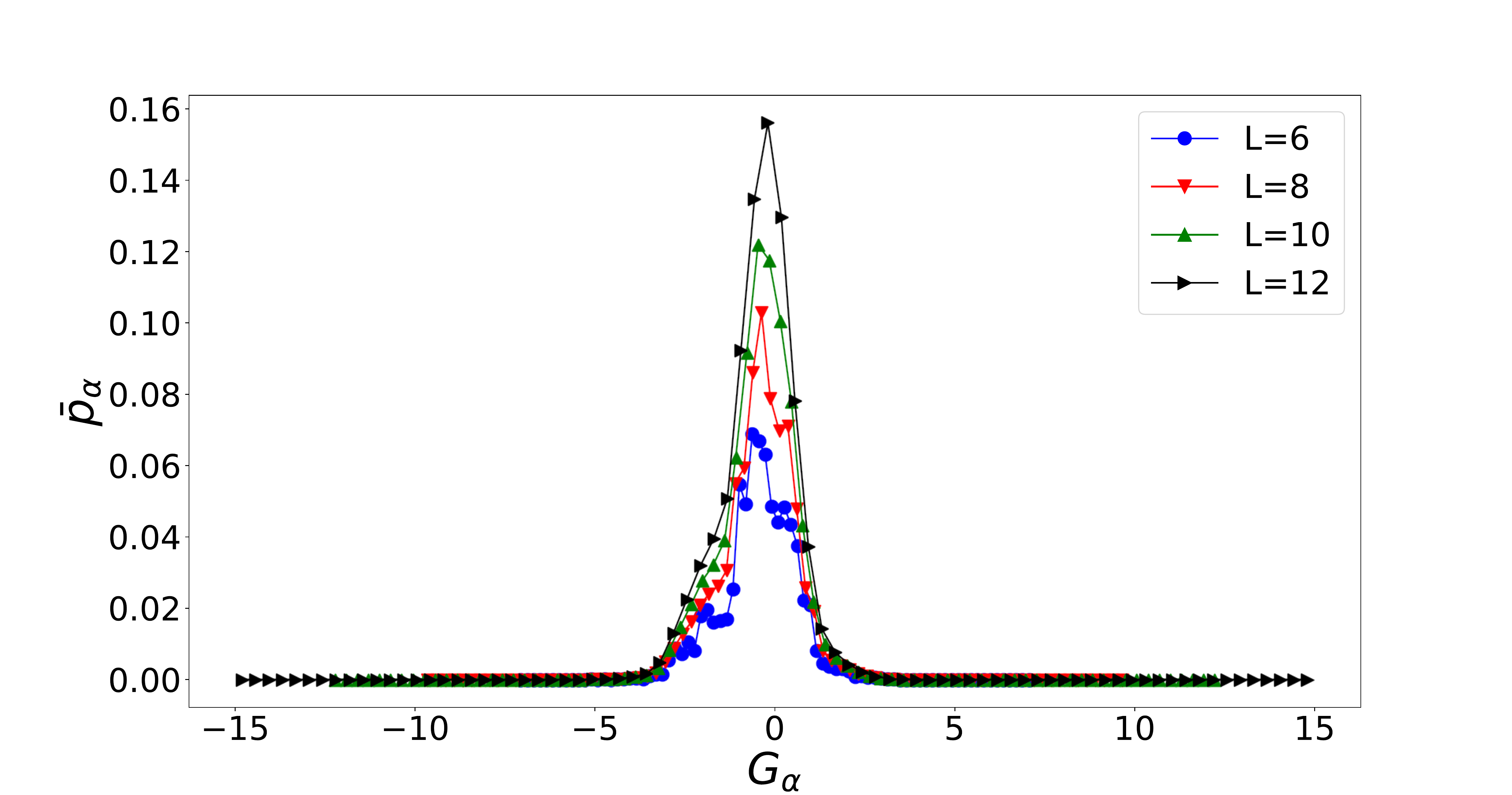}
	\includegraphics[width=0.49\linewidth]{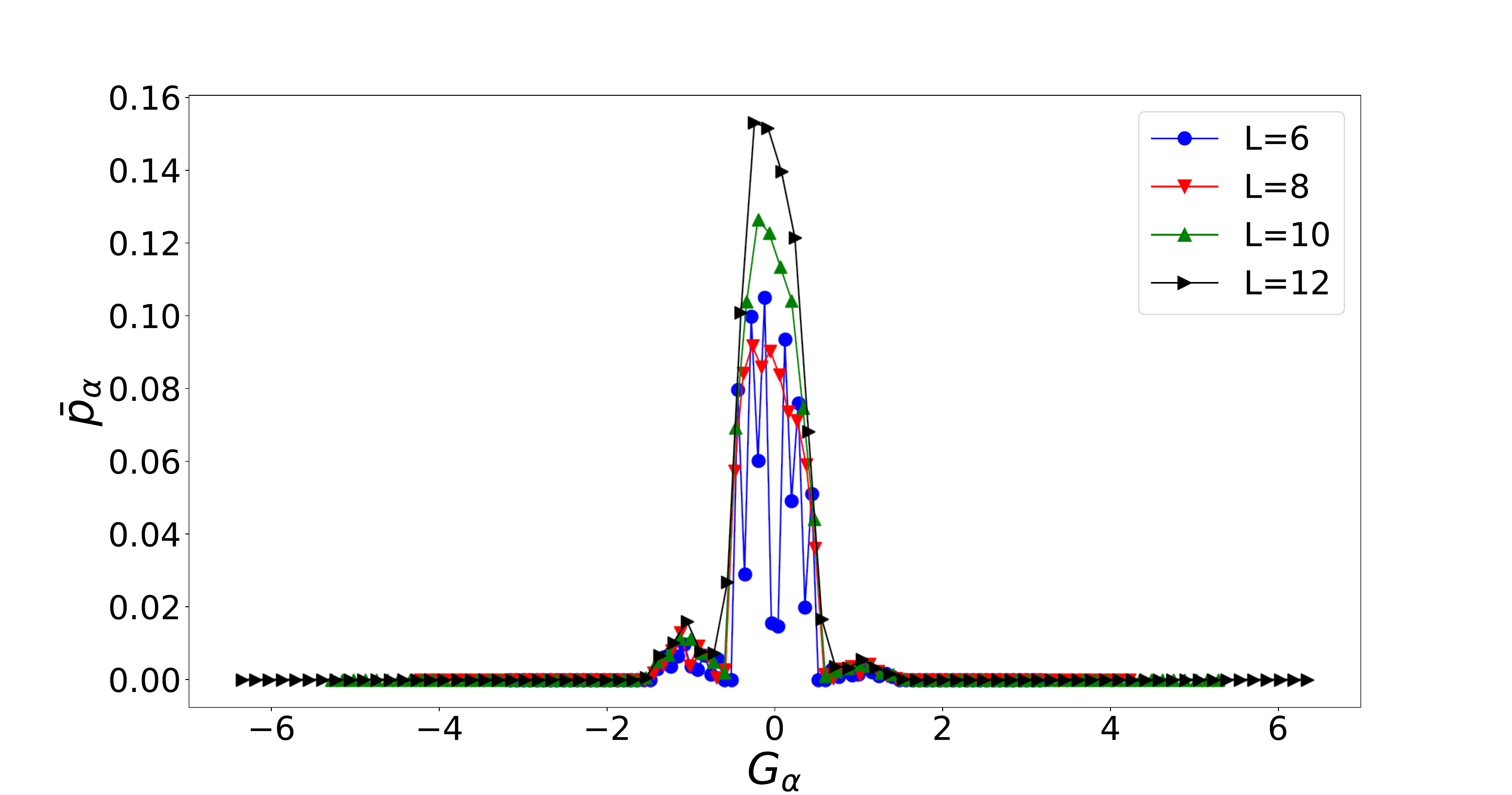}
	\caption{Plot of $\bar{p}_\alpha$ obtained from coarse-graining  $v_\alpha \equiv \frac{\rho_{jj} |A_{jk}|^2}{C_0}$ against frequency, with $n=80$ bins at various system sizes. The case for which ETH is satisfied is featured on the left, while the  integrable case is on the right. For the latter the distribution is less uni-modal, which leads to larger values of $a(\epsilon)$, as depicted on Fig.~\ref{fig:aplotint} (right).}
	\label{fig:hist}
\end{figure}

The result is given in Figure~\ref{fig:hist}. The ETH case approaches a unimodal distribution quicker than the integrable case, however both distributions appear favorable in the coarse grained probabilities. Note, increasing the number of bins significantly did not significantly affect the shape of the curve.

\end{document}